\let\pref=\prettyref
\newcommand{\savehyperref}[2]{\texorpdfstring{\hyperref[#1]{#2}}{#2}}
\newcommand{\ignore}[1]{}
\newtheorem{theorem}{Theorem}[section]
\newtheorem{lemma}[theorem]{Lemma}
\newtheorem{claim}[theorem]{Claim}
\newtheorem{corollary}[theorem]{Corollary}
\newtheorem{fact}{Fact}
\newtheorem{conjecture}[theorem]{Conjecture}
\newtheorem{definition}{Definition}
\newtheorem{remark}{Remark}
\renewcommand{\ignore}[1]{}
\renewcommand{\Pr}{\mathop{\bf Pr\/}}
\newcommand{\R}{\mathbb R}
\newcommand{\F}{\mathbb F}
\newcommand{\eps}{\epsilon}
\newcommand{\poly}{\mathrm{poly}}
\newcommand{\wh}{\widehat}
\newcommand{\calP}{{\cal P}}
\newcommand{\calD}{{\cal D}}
\newcommand{\calJ}{{\cal J}}
\newcommand{\bx}{\boldsymbol{x}}
\newcommand{\abs}[1]{\left\lvert #1 \right\rvert}
\def\eps{\varepsilon}
\def\bar{\overline}
\def\Pr{\mathbf{P}}
\def\Exp{\mathbf{E}}
\def\chi{X}
\def\F{{\mathcal F}}
\def\m{{[m]}}
        {\hspace*{\fill}$\Box$\par}
\begin{document}

\begin{titlepage}
    \thispagestyle{empty}

\title{Testing Coverage Functions}


\author{ Deeparnab Chakrabarty\\ \small Microsoft Research, Bangalore \\ \small Email: {\tt dechakr@microsoft.com} \and Zhiyi Huang \\  \small University of Pennsylvania \\  \small Email: {\tt hzhiyi@cis.upenn.edu}} \date{} 
\maketitle

\begin{abstract}
    \thispagestyle{empty}
A {\em coverage function} $f$ over a ground set $[m]$ is associated with a universe $U$ of weighted elements and $m$ sets $A_1,\ldots,A_m \subseteq U$, and for any $T\subseteq [m]$, $f(T)$ is defined as the total weight of the elements in the union $\cup_{j\in T} A_j$. Coverage functions are an important special case of submodular functions, and arise in many applications, for instance as a class of utility functions of agents in combinatorial auctions.

Set functions such as coverage functions often lack succinct representations, and in algorithmic applications, an access to a value oracle is assumed. In this paper, we ask whether one can test if a given oracle is that of a coverage function or not. We demonstrate an algorithm which makes $O(m|U|)$ queries to an oracle of a coverage function and completely reconstructs it. This gives a polytime tester for {\em succinct} coverage functions for which $|U|$ is polynomially bounded in $m$. In contrast, we demonstrate a set function which is ``far" from coverage, but requires $2^{\tilde{\Theta}(m)}$ queries to distinguish it from the class of coverage functions.
\end{abstract}
\end{titlepage}

\section{Introduction} \label{sec:intro}

Submodular set functions are set functions $f:2^\m \mapsto \R$ defined
over a ground set $[m]$ which satisfy the property: $f(S\cap T) +
f(S\cup T) \le f(S) + f(T)$.  These are arguably the most extensively
studied set functions, and arise in various fields such as combinatorial
optimization, computer science, electrical engineering, economics, etc.
In this paper, we focus on a particular class of submodular functions,
called coverage functions.

Coverage functions arise out of families of sets over a universe.
Given a universe $U$ and sets $A_1,\cdots,A_m \subseteq U$, the coverage of a collection of sets $T\subseteq [m]$
is the number of elements in the union $\bigcup_{j\in T} A_j$. More generally, each element $i\in U$ has a weight $w_i \ge 0$, inducing the  function $f:2^{[m]} \mapsto \R_{\ge 0}$:
$$\textstyle\forall T \subseteq \m : f(T) = w\left(\bigcup_{i \in T}
A_i\right)$$
with the usual notation of $w(S) := \sum_{i\in S} w_i$. A set function is called a {\em coverage function} iff~$f$ 
is induced by a set system as described above. 
 In the definition above, the size of the universe $U$ of the inducing set system can be arbitrarily large. 
We call a coverage function {\em succinct} if $|U|$ is bounded by a fixed polynomial in $m$. 

Coverage functions arise in many applications (plant location \cite{CFN}, machine learning \cite{Krause}); an important one being that in combinatorial auctions \cite{LLN01,Nis-Chapter}. Utilities of agents are often modeled as coverage functions -- agents are thought to have certain requirements (the universe $U$) and the items being auctioned (the $A_i$'s) fulfill certain subsets of these. Many auction mechanisms take advantage of the specific property of these utility functions; a notable one is the recent work of 
Dughmi,  Roughgarden and Yan \cite{DRY11} who give $O(1)$-approximate truthful mechanisms when utilities of agents are coverage. (Such a result is not expected for general submodular functions \cite{DV11}.) 
%
%

In general, set functions have exponentially large (in $m$) description, and algorithmic applications often assume access to a value oracle which returns $f(T)$ on being queried a subset $T\subseteq \m$. Efficient algorithms making only polynomially many queries to this oracle, exploit the coverage property of the underlying function to ensure correctness.
This raises the question we address in this paper: 
\begin{quote}{\em Can one test, in polynomial time, whether the oracle at hand is indeed that of a coverage (or a `close' to coverage) function?}
\end{quote}
\noindent It is easy to see that the parenthesized qualification in the above question is necessary. Using property testing parlance  \cite{GGR,Gol-surv}, we say a function is $\eps$-far from coverage if it needs to be modified in $\eps$-fraction of the points to make it a coverage function.

Our first result (\pref{thm:learn-cov}) is a reconstruction algorithm
which makes $O(m|U|)$ queries to a value oracle of a {\em true} coverage
function and reconstructs the coverage function, that is, deduces the
underlying set system $(U;A_1,\ldots,A_m)$ and weights of the elements
in $U$. Such an algorithm can be used distinguish coverage functions
with those which are $\eps$-far from being coverage (\pref{cor:cor1}).
In particular, for succinct coverage functions, the answer to the above
question is yes.

Our second result illustrates why the testing question may have a negative answer for general coverage functions. We show that certifying `non-coverageness' requires exponentially many queries. To explain this, let us first consider a certificate of a non-submodularity. By definition, for any non-submodular function $f$, there must exist sets $S, T, S\cup T,$ and $S\cap T$ such that $f(S) + f(T) < f(S\cup T) + f(S\cap T)$. Therefore, four queries (albeit non-deterministic) to a value oracle of $f$ certifies non-submodularity of $f$. In contrast, we exhibit non-coverage functions for which {\em any} certificate needs to query the function at exponentially many sets (\pref{cor:cert}).

In fact, from just the definition of coverage functions it is not {\em a priori} clear what a certificate for coverageness should be.
In \pref{sec:bipbase}, we show that a particular linear transformation (the $W$-transform) of set functions can be used: we show a function $f$ is coverage iff all its $W$-coefficients are non-negative. 
This motivates a new notion of  distance to coverageness which we call  $W$-distance: a set function has $W$-distance $\eps$ if at least an $\eps$-fraction of the $W$-coefficients are negative. This notion of distance captures the density of certificates to non-coverageness. Our lower bound results show that testing coverage functions against this notion of distance is infeasible: we construct set functions with $W$-distance at least ~$1 - e^{- \Theta(m)}$ which require $2^{\Theta(m)}$ queries to distinguish them from coverage functions (\pref{cor:wdist}). 

How is the usual notion of distance to coverage related to the $W$-distance? We show in \pref{sec:distances} that there are functions which are far in one notion but close in the other. Nonetheless, we believe that the functions we construct for our lower bounds also have large (usual) distance to coverage functions. We prove this assuming a conjecture on the number of roots of certain multilinear polynomials; we also provide some partial evidence for this conjecture.

%

\ignore{
Of course, if a function was such that it agreed with a coverage function in all sets except one, then it would be impossible to distinguish an oracle for this function from that of the coverage function in reasonable time.
Therefore, the correct question is whether one can design a testing algorithm which can distinguish a coverage function from one that is ``far'' from being coverage?

%
%

What does it mean for a set function to be ``far" from being coverage? Using notions in property testing, one can say a set function is $\eps$-far from being a coverage function if $f$ needs to be modified in at least an $\eps$-fraction of the subsets of $\m$ to make it a coverage function. This notion of distance is in fact applicable to almost any property and has been the standard notion used in the literature \cite{GGR,Gol-surv}. 
For the special case of coverage functions, however, one can define a different
notion of distance. Recall, coverage functions are induced by a set
system with non-negative weights on the universe. In fact, as we show in
\pref{sec:bipbase}, {\em any} set function is {\em uniquely} induced by a set system with weights on the universe, except that the weights are allowed to be negative.
 This motivates our new notion of $W$-distance: The $W$-distance of a set
function is $\eps$ if an $\eps$-fraction of the weights in its
inducing set system are negative. We make these notions precise in \pref{sec:bipbase}.
The two notions of distance are unrelated in the sense that being
far/near in one measure doesn't imply being far/near in the other.
However, there is some correlation; see \pref{sec:distances} for a discussion.



\paragraph{Informal Statement of Contributions and Techniques.}

Our first result is a positive one: Given a value oracle to a coverage
function $f:2^{[m]}\to \R_{\ge 0}$ induced by a set system on universe
$U$, we give an algorithm which makes $O(m|U|)$ queries and {\em
reconstructs} the coverage function. That is, the algorithm can evaluate
the value of $f$ on any set in $O(m|U|)$ time without making any further
queries to the oracle. In particular, our algorithm runs in polynomial
(in $m$) time if the coverage function is succinct. This immediately
implies a polynomial (in both $m,\frac{1}{\eps}$) time testing algorithm
for succinct coverage functions.

Our second result is an information theoretic lower bound: For any $\eps
> 0$, we demonstrate a set function $f^*$ whose $W$-distance is $(1 -
\eps)$, but it takes $2^{\Omega(m)}$ queries, even for a computationally
all powerful algorithm, to distinguish it from a coverage function. In
fact, the $\eps$ above can in fact be replaced by $\exp(-\Theta(m))$; we
refer the reader to \pref{sec:far} for the precise statement. In \pref{sec:distances}, we show some evidence that the usual distance of $f^*$ to coverage may also be quite large.

In \pref{sec:bipbase}, we show that {\em any} set function can be
interpreted as a  `coverage' function when the elements in the universe
are allowed to have negative weights. We believe this outlook sheds
light on the nature of coverage functions.  In particular, it allows us
to define this new notion of distance which could be of independent
interest.  In \pref{sec:succ}, we use this interpretation to obtain our reconstruction algorithm. 
Our algorithm might of be thought of as reconstructing a bipartite graph via access to the cut function oracle.
Finally, in \pref{sec:far}, we show that testing coverage functions is
equivalent to testing if a given high dimensional vector lies in a certain linear polyhedron. We obtain our lower bound using Farkas to get  conditions for infeasibility, and proving that these conditions cannot be satisfied if the number of queries are too small. 
}

\paragraph{Related Work}

The work most relevant to, and indeed which inspired this paper, is that
by Seshadhri and Vondr\'ak \cite{SV11}, where the authors address the
question of testing general submodular set functions. The authors focus
on a particular simple testing algorithm, the ``square tester'', which
samples a random set $R$, $i,j\notin R$ and checks whether or not 
$ f(R,i,j) + f(R) \le f(R,i) + f(R,j)$. \cite{SV11} show that $\eps^{-\tilde{O}(\sqrt{m})}$ random samples are sufficient to distinguish submodular functions from those $\eps$-far from submodularity, and furthermore, at least $\eps^{- 4.8}$ samples are necessary. Apart from the obvious problem of closing this rather large gap, the authors of \cite{SV11} suggest tackling special, well-motivated cases of submodularity. In fact, the question of testing coverage functions was specifically raised by Seshadhri in \cite{Sesh-Open} (attributed to N. Nisan).

It is instructive to compare our results with that of \cite{SV11}. Firstly, although coverage functions are a special case of submodular functions, the sub-exponential time tester of \cite{SV11} {\em does not} imply a tester for coverage functions. This is because a function might be submodular but far from coverage; in fact, the function $f^*$ in our lower bound result is submodular. Given our result that there are no small certificates of non-coverageness, we believe testing coverageness is {\em harder} than testing submodularity.

A recent relevant paper is that of Badanidiyuru et. al. \cite{BDF+12}. Among other results, \cite{BDF+12} shows that any coverage function $f$ can be arbitrarily well approximated by a succinct coverage function. More precisely, if $f$ is 
defined via $(U;A_1,\ldots,A_m)$ with weights $w$, then for any $\eps>0$, there exists another coverage function $f'$ defined via $(U';A'_1,\ldots,A'_m)$ with weights $w'$ such that $f'(T)$ is within $(1\pm \eps) f(T)$ such that $|U'| = \poly(m,1/\eps)$. This, in some sense shows that succinct coverage functions capture the essence of coverage functions.
Unfortunately, this `sketch' is found using random sampling on the
universe $U$ and it is open whether this can be obtained via polynomially
many queries to an oracle for $f$.
\ignore{
The question of testing submodularity was first raised by Parnas, Ron and Rubinfeld \cite{PRR03}, in a more generalized context than those of set functions, but gave results for the special case of testing whether a matrix is a Monge matrix or not. Their results are orthogonal to our work, and we refer the interested reader to their paper. Another strand of research which has originated in the recent years is that of {\em learning} submodular functions \cite{GHIM09,BH11,GHRU11}. In these works, oracle access to a submodular function is assumed, and the algorithm needs to `learn' enough via making polynomially many queries, so as to be able to evaluate the value of an unqueried set up to {\em multiplicative factors} which determines the quality of the tester. For instance, Goemans, Harvey, Iwata and Mirrokni \cite{GHIM09} show there exists a $\tilde{O}(\sqrt{m})$-factor learning algorithm,  and this is almost tight.
Note that for succinct coverage functions, in this terminology, our result implies we can perfectly learn them.
}

\subsection{The $W$-transform: Characterizing Coverage Functions}\label{sec:bipbase}

Given a set function  $f:2^{[m]} \mapsto \R_{\ge 0}$, we define the $W$-transform $w:2^{[m]}\setminus \emptyset \mapsto \R$ as

\begin{equation}\label{eq:maindef}
\forall S\in 2^\m\setminus\emptyset , \qquad w(S) = \sum_{T:S\cup T = \m} (-1)^{|S\cap T|+1}f(T)
\end{equation}

\noindent
We call the resulting set $\{w(S): S\subseteq [m]\}$ the $W$-coefficients of $f$. 
The $W$-coefficients are unique; this follows since the $(2^m-1)\times(2^m-1)$ matrix $M$ defined as $M(S,T) = (-1)^{|S\cap T| +1}$ if $S\cup T = [m]$ and $0$ otherwise, is full rank\footnote{One can check $M^{-1}(S,T) = 1$ if $S\cap T\neq \emptyset$ and $0$ otherwise. 
}.
Inverting we get the {\em unique} evaluation of $f$ in terms of its $W$-coefficients.

\begin{equation}\label{eq:wtof}
\forall T\subseteq [m], ~~~ f(T) = \sum_{S\subseteq \m: S\cap T\neq \emptyset} w(S)
\end{equation}

\noindent
If $f$ is a coverage function induced by the set system $(U;A_1,\cdots,A_m)$, then the function $w(S)$ precisely is the size of 
$\bigcap_{i\in S} A_i$ and is hence non-negative. This follows from the inclusion-exclusion principle.
Indeed the non-negativity of the $W$-coefficients is a characterization of coverage functions.
\begin{theorem}\label{thm:char-cov}
A set function $f:2^{[m]} \mapsto \R_{\ge 0}$ is coverage iff all its $W$-coefficients are non-negative.
\end{theorem}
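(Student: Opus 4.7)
The plan is to prove both directions separately, with the forward direction being essentially the computation alluded to in the paragraph preceding the theorem, and the reverse direction being a clean construction of an inducing set system from the $W$-coefficients themselves.

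For the ``only if'' direction, suppose $f$ is induced by $(U;A_1,\ldots,A_m)$ with non-negative weights $\{w_e\}_{e\in U}$. For each $e\in U$, let $B_e = \{i : e \in A_i\}$, so that $f(T) = \sum_e w_e\,\mathbb{1}[B_e\cap T\neq\emptyset]$. I would substitute this expression into \pref{eq:maindef}, swap the order of summation, and reparametrize the outer sum by writing $T = ([m]\setminus S)\cup R$ with $R\subseteq S$, noting that $S\cup T = [m]$ is equivalent to this form and that $|S\cap T| = |R|$. Splitting on whether $B_e\subseteq S$ or not, the alternating sum $\sum_{R\subseteq S}(-1)^{|R|+1}$ vanishes for $B_e\not\subseteq S$ (since $S\neq\emptyset$), while for $B_e\subseteq S$ the standard identity $\sum_{R\subseteq S'}(-1)^{|R|} = \mathbb{1}[S'=\emptyset]$ applied to $S' = S\setminus B_e$ forces $S = B_e$. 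The resulting equality $w(S) = \sum_{e:\,B_e = S} w_e$ is manifestly non-negative.

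For the ``if'' direction, assume every $w(S)$ is non-negative, and explicitly exhibit an inducing set system. Define the universe $U := \{e_S : \emptyset\neq S\subseteq[m]\}$, assign the weight $w_{e_S} := w(S)\ge 0$, and set $A_i := \{e_S : i\in S\}$ for each $i\in[m]$. For any $T\subseteq[m]$, the induced coverage value is
\[
f'(T) \;=\; \sum_{e_S \,\in\, \bigcup_{i\in T}A_i} w(S) \;=\; \sum_{S\,:\,S\cap T \neq \emptyset} w(S),
\]
which by \pref{eq:wtof} equals $f(T)$. Thus $f = f'$ is coverage.

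The only mild obstacle is the bookkeeping in the alternating-sum computation of the forward direction; everything else is essentially formal, once one realizes that \pref{eq:wtof} directly dictates what the inducing set system should be (one element per non-empty $S$, placed in exactly those $A_i$ with $i\in S$). I would state the forward calculation cleanly by isolating the identity $w(S) = \sum_{e:\,B_e=S} w_e$, since it both proves non-negativity and exposes the structural meaning of $W$-coefficients for the reverse direction.
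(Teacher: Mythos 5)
Your proof is correct, and the ``if'' direction coincides with the paper's: same universe of non-empty subsets, same $A_i=\{S: i\in S\}$, same appeal to \pref{eq:wtof}. Where you genuinely diverge is the ``only if'' direction. The paper never computes the alternating sum in \pref{eq:maindef}: it canonicalizes the inducing set system (merging elements with identical incidence sets $B_e$ and padding with zero-weight elements so that every subset of $[m]$ appears), writes $f(T)=\sum_{S:S\cap T\neq\emptyset}w(S)$ with $w\ge 0$, and then invokes the uniqueness of the $W$-coefficients (the full-rank matrix $M$) to conclude these non-negative weights \emph{are} the $W$-coefficients. You instead plug the representation $f(T)=\sum_e w_e\,\mathbb{1}[B_e\cap T\neq\emptyset]$ directly into \pref{eq:maindef}, and the reparametrization $T=([m]\setminus S)\cup R$, $|S\cap T|=|R|$, together with the two alternating-sum identities (both valid since $S\neq\emptyset$), correctly yields $w(S)=\sum_{e:B_e=S}w_e\ge 0$. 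Each route buys something: the paper's argument is computation-free once invertibility of the transform is granted, while yours avoids relying on uniqueness for this direction and produces the explicit combinatorial identity for the coefficients --- the weight of elements contained in exactly the sets $A_i$ with $i\in S$ --- which is in fact a sharper and more accurate statement than the paper's informal remark before the theorem that $w(S)$ is the weight of $\bigcap_{i\in S}A_i$ (that remark, read literally, fails e.g.\ when two sets $A_i$ coincide; your identity is the correct one). The one point to state explicitly if you write this up is the case $B_e=\emptyset$, which your case analysis handles silently (the inner sum vanishes there since $S\neq\emptyset$), and, as in the paper, the convention $f(\emptyset)=0$ implicit in \pref{eq:wtof}.
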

\begin{proof}
Suppose that $f$ is a function with all $W$-coefficients non-negative. Consider a universe $U$ consisting of $\{S: S\subseteq [m]\}$ with weight of element $S$ being $w(S)$, the $S$th $W$-coefficient of $f$. Given $U$, for $i = 1\ldots m$, define $A_i  := \{S \subseteq [m]: i\in S\}$. For any $T\subseteq [m]$, 
$\bigcup_{i \in T} A_i = \{S\subseteq [m]: S\cap T \neq \emptyset\}$. From \eqref{eq:wtof} we get 
$f(T) = w\left(\bigcup_{i \in T} A_i\right)$ proving that $f$ is a coverage function. 

\indent
Suppose $f$ is a coverage function. By definition, there exists $(U; A_1, \ldots, A_m)$ with non-negative weights on elements in $U$ such that $f(T) = w\left(\bigcup_{i \in T} A_i\right)$. Each element in $S \in U$ corresponds to a subset of $[m]$ defined as $\{i: S \in A_i\}$. We may assume each element of $U$ corresponds to a unique subset; if more than one elements have the same incidence structure, we may merge them into one element with weight equalling sum of both the weights. This transformation doesn't change the function value and keeps the weights non-negative. Furthermore, we may also assume every subset on $[m]$ is an element of $U$ by giving weights equal to $0$; this doesn't change the function value either. In particular, $|U|$ may be assumed to be $2^m$. As before, one can check that for any $T\subseteq [m]$, 
$f(T) = \sum_{S: S\cap T\neq \emptyset} w(S)$. From \eqref{eq:wtof} we get that these are the $W$-coefficients of $f$, and are hence non-negative.
\end{proof}

\noindent
From the second part of the proof above, note that the positive $W$-coefficients of a coverage function $f$ correspond to the elements in the universe $U$. Let $\{S: w(S)> 0\}$ be the support of a coverage function $f$. Note that succinct coverage functions are precisely those with polynomial support size.

One can use \pref{thm:char-cov} to certify non-coverageness of a function $f$: one of its $W$-coefficients $w(S)$ must be negative, and  the function values in the summand of \eqref{eq:maindef} certifies it. Observe, however, that this certificate can be exponentially large. In \pref{sec:far} we'll show this is inherent in {\em any} certificate of coverageness.
The $W$-transformation also motivates the following notion of distance to coverage functions.

\begin{definition}
The $W$-distance of a function $f$ from coverage functions is the fraction of its  negative $W$-coefficients.
\end{definition}

\paragraph{Comparison with Fourier Transformation}
Readers who are familiar with the analysis of Boolean functions might
find \pref{eq:maindef} similar to the Fourier transformation. Indeed, if
we sum over all $T$ in the summation of \pref{eq:maindef} instead of
only over the $T$ s.t.~$S \cup T = \m$, then it becomes the Fourier
transformation. However, it is worth pointing out that due to this subtle
change, the $W$-transformation behaves quite differently to the
representation by Fourier basis. In particular, unlike the Fourier
basis, the basis of the $W$-transform is not orthonormal with
respect to the usual notion of inner product.

\ignore{
\newpage

\paragraph{The $W$-representation of a set function.}

Given weights $w:B \mapsto \R$ and the set function $f$ so obtained, note that the value of $f$ at a set $T$ can be written as a {\em linear combination} of the weights on $2^\m-1$ sets.
$$\textstyle \forall T\subseteq [m], ~~~ f(T) = w(\Gamma_{G^*}(T)) = \sum_{S\subseteq \m: S\cap T\neq \emptyset} w(S)$$ 
\noindent
The second inequality follows from the definition of $G^*$. In other words, if we define the $(2^m-1)\times(2^m - 1)$ matrix,
$$\hat{H}(S,T) := 
	\begin{dcases*}        
		1  & if $S\cap T\neq \emptyset$ \\
        		0  & otherwise
        \end{dcases*}$$
defined over non-empty subsets $S,T \subseteq \m$, we get $f \equiv \hat{H}\cdot w$.  The following claim shows that the matrix $\hat{H}$ is full rank. For non-empty sets $S,T\subseteq\m$, define the matrix
$$H(S,T) := \begin{dcases*} 
    (-1)^{|S\cap T| + 1}  & if $S\cup T = \m$ \\
    0  & otherwise
\end{dcases*}$$

\begin{fact} \label{fact:inverse}
$H = \hat{H}^{-1}$.
\end{fact}

The proof of \pref{fact:inverse} is quite straightforward. For
completeness we include the proof in \pref{app:intro}. By
\pref{fact:inverse} and the discussion preceding it, we have the
following lemma.

\begin{lemma}\label{lem:char-cov}
For any set function $f:2^\m\mapsto \R$ over a ground set $\m$ with $f(\emptyset)=0$, there exists weights $w:2^\m\setminus\emptyset \mapsto \R$ defined as 
\begin{equation}\label{eq:maindef}
\textstyle \forall S\in 2^\m\setminus\emptyset , \qquad w(S) = \sum_{T:S\cup T = \m} (-1)^{|S\cap T|+1}f(T) \ge 0
\end{equation}
such that for any $T\subseteq \m$, $f(T) = w(\Gamma_{G^*}(T))$. $f$ is
coverage iff.~all the weights are non-negative.
\end{lemma}

Given any set function $f$, we call the corresponding unique weight
function $w:2^\m\mapsto \R$ the {\em $W$-representation} of $f$. If $f$
is coverage, then we call $\{S:w(S)>0\}$ the support of $f$. Observe
that the support precisely coincides with the universe of the set system
which induces the coverage function. Thus, a coverage function is
succinct iff.~its support is bounded by a polynomial in $m$. The above
lemma also motivates the notion of $W$-distance: A set function has
$W$-distance $\eps$ to coverage, if $\eps$ fraction of the weights in
$f$'s $W$-representation are negative.  

\paragraph{Comparison with Fourier Transformation.}

Readers who are familiar with the analysis of Boolean functions might
find \pref{eq:maindef} similar to the Fourier transformation. Indeed, if
we sum over all $T$ in the summation of \pref{eq:maindef} instead of
only over the $T$ s.t.~$S \cup T = \m$, then it becomes the Fourier
transformation. However, it is worth pointing out that due to this subtle
change, the $W$-representation behaves quite differently to the
representation by Fourier basis. In particular, unlike the Fourier
basis, the basis of the $W$-representation is not orthonormal with
respect to the usual notion of inner product.
}

\ignore{
\subsection{Formal Definitions and Statements of Results.}

\begin{definition}
The distance of a set function $f$ to another set function $g$ is the fraction of the subsets of $\m$ in which they differ. The distance of $f$ to coverage is $\eps$, if its distance to any coverage function is at least $\eps$.
\end{definition}

\begin{definition}
The $W$-distance of a set function $f$ to coverage is precisely the fraction of negative weights in its $W$-representation.
\end{definition}

\begin{definition}
An algorithm is a {\em (coverage)  tester} of a set function $f$, if  it takes as input a value oracle to $f$,
a parameter $\eps$, and satisfies the following properties:
\begin{itemize}
\item If $f$ is coverage, then the algorithm returns {\sc Yes.}
\item If $f$ has distance $\ge \eps$ to coverage, then the algorithm returns {\sc No} with probability at least $2/3$. 
\end{itemize}
An algorithm is called a $W$-tester of coverage functions if in the second bullet above, the distance is replaced by $W$-distance.
\end{definition}

The formal statements of our results regarding testing coverage
functions are:
\begin{theorem}\label{thm:1}
There is a $\mbox{poly}(m,\frac{1}{\eps})$ time algorithm which can test succinct coverage functions.
\end{theorem}

\begin{theorem}\label{thm:2}
Any $W$-tester, for $\eps$ as large as $(1 - e^{-\Theta(m)})$ need at least $2^{\Theta(m)}$ queries.
\end{theorem}
}

\def\x{{\bf x}}
\def\y{{\bf y}}
\def\F{{\mathcal F}}

\section{Reconstructing Succinct Coverage Functions} \label{sec:succ}

Given a coverage function $f$, suppose 
$\{S_1,\ldots,S_n\}$ is the support of $f$. That is, these are the sets in the $W$-transform of $f$
with $w(S_i) > 0$, and all the other sets have weight $0$. We now give an algorithm to find these sets and weights using $O(mn)$ queries.  As a corollary, we will obtain a polynomial time algorithm for testing succinct coverage functions where $n = \mbox{poly}(m)$.

The procedure is iterative. 
The algorithm maintains a partition of $2^\m$ at all times, and for each part in the partition, stores the total weight of the all the sets contained in the part. We start of with the trivial partition containing all sets whose weight is given by $f([m])$.
In each iteration, these partitions are refined; for instance, in the first iteration we divide the partition into sets containing a given element $i$ and those that don't contain the element $i$. The total weights of the first collection can be found by querying $f(\{i\})$. Any time the sum of a part evaluates to $0$, we discard it and subdivide it no more\footnote{Familiar readers will observe the similarity of our algorithm and the Goldreich-Levin algorithm to compute `large' Fourier coefficients (see, for instance, \cite{ODonnell} for an exposition).}. After $m$ iterations, the remaining $n$ parts give the support sets and their weights. To describe formally, we introduce some notation.

Given a vector $\x \in \{0,1\}^k$ 
we associate a subset of $[k]$ containing the elements $i$ iff $\x(i) = 1$. At times, we abuse notation and use the vector to imply the subset.
Let $\F(\x) := \{S\subseteq \m: S\cap [k] = \x\}$, that is, subsets of $\m$ which ``match'' with the vector $\x$ on the first $k$ elements. Note that $|\F(\x)| = 2^{m-k}$, and $\{\F(\x):\x\in\{0,1\}^k\}$ is a partition of $2^\m$;
if $k=0$, then $\F(\x)$ is the trivial partition consisting of all subsets of $\m$.
Given $\x\in \{0,1\}^k$, we let $\x\oplus 0$ be the $(k+1)$ dimensional vector with $\x$ appended with a $0$. Similarly, define $\x\oplus 1$.
At the $k$th iteration, the algorithm maintains the partition $\{\F(\x):
\x\in \{0,1\}^k\}$ and the total weight of subsets in each $\F(\x)$. In
the subsequent iteration refines each partition $\F(\x)$ into
$\F(\x\oplus 0)$ and $\F(\x \oplus 1)$. However, if a certain weight of
a part of the partition evaluates to $0$, then the algorithm does not need to refine that part any further since all the weights of that subset {\em must} be zero. The algorithm terminates in $m$ iterations making  $O(mn)$ queries. 
We now give the refinement procedure. In what follows, we say a vector $\y \le \x$ if they are of the same dimension and $\y(i) = 1 \Rightarrow \x(i)=1$. We say $\y < \x$ if $\y\le\x$ and $\y\neq \x$.

\begin{figure}
    \centering
\fbox{\begin{minipage}{.9\textwidth}
    \noindent Procedure {\bf ~Refine}
    \begin{algorithmic}[1]
    	\STATE {\bf Input:} $0\le k\le m$, $\{w(\F(\x)) > 0: \x \in \{0,1\}^k\}$
	\STATE {\bf Output:} $\{w(\F(\x\oplus 0)), w(\F(\x\oplus 1)): \x\in\{0,1\}^k\}$ 
	\STATE Order $\{\x: w(\F(\x))>0 \}$ by increasing number of $1$'s breaking ties arbitrarily. \\
		     Call the order $\{\x_1,\ldots,\x_N\}$;
	\FOR{$i= 1 \to N$}
		 	\STATE Query $f([k]\setminus \x_i) = F^0_i$ and $f(([k]\setminus \x_i)\cup k+1) = F^1_i$.
			\STATE Define $\Delta_{\x_i} := F^1_i - F^0_i - \sum_{j < i} \Delta_{\x_j}$. \label{1}
			\STATE $w(\F(\x_i \oplus 1)) = \Delta_{\x_i}$; $w(\F(\x \oplus 0)) = w(\F(\x_i)) - \Delta_{\x_i}$
       \ENDFOR
     \end{algorithmic}
\end{minipage}}

\bigskip 

\fbox{\begin{minipage}{.9\textwidth}
    \noindent Procedure {\bf ~Recover Coverage}
    \begin{algorithmic}[1]
    	\STATE {\bf Input:} Value oracle to coverage function $f$,
	\STATE {\bf Output:} $\{S_1,\ldots,S_n\}$ with $w(S_i) > 0$.
	\STATE Initialize $k=0$, $\x$ to be the empty vector, and list $L$ to contain $\x$.\\ 
		     Let $w(\F(\x)) = f(\m)$.
	\FOR{$k = 1 \to m$} 
		\STATE Run {\bf Refine} on each $\x$ in list $L$ and remove it.
		\STATE Add $\x\oplus 0$ and $\x\oplus 1$ to $L$ only if the weights evaluate to positive.
	\ENDFOR
	\STATE For each $\x\in \{0,1\}^m$ in $L$, return corresponding set and weight calculated by the {\bf Refine} procedure.
	        \end{algorithmic}
\end{minipage}}
\end{figure}


\noindent

\begin{claim}
The procedure {\bf Refine} returns the correct weights of the refinement.
\end{claim}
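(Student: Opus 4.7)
The plan is to use the inversion identity \pref{eq:wtof} to express the queried difference $F^1_i - F^0_i$ as a sum of $W$-coefficients carried by refined parts of the partition, and then to peel off the already-known contributions by induction on the processing order.

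Applying \pref{eq:wtof} to both $F^0_i=f([k]\setminus\x_i)$ and $F^1_i=f(([k]\setminus\x_i)\cup\{k+1\})$ and subtracting cancels every $S$ that already intersects $[k]\setminus\x_i$; the surviving $S$ are exactly those with $S\cap[k]\subseteq\x_i$ and $k+1\in S$. Grouping these survivors by $\y:=S\cap[k]\in\{0,1\}^k$ yields the key identity
\[
F^1_i - F^0_i \;=\; \sum_{\y\in\{0,1\}^k,\ \y\le\x_i}\ w\!\left(\F(\y\oplus 1)\right).
\]
Next I would prune the right-hand side to indices that actually appear on the list. Non-negativity of the $W$-coefficients (\pref{thm:char-cov}) implies that if $w(\F(\y))=0$ then every $w(S)$ with $S\in\F(\y)$ vanishes, and hence $w(\F(\y\oplus 0))=w(\F(\y\oplus 1))=0$ as well. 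Consequently only $\y$'s on the current list contribute, and among those exactly the $\x_j$'s with $\x_j\le\x_i$ appear in the sum above.

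Now I would induct on $i$. The tie-broken Hamming-weight ordering ensures that any $\x_j<\x_i$ coordinate-wise has strictly fewer $1$'s than $\x_i$ and therefore appears earlier, so $j<i$ and by the inductive hypothesis $\Delta_{\x_j}=w(\F(\x_j\oplus 1))$. Substituting these values into the definition of $\Delta_{\x_i}$ subtracts exactly the $\y<\x_i$ contributions from the key identity and leaves $\Delta_{\x_i}=w(\F(\x_i\oplus 1))$. The sibling output, $w(\F(\x_i\oplus 0))=w(\F(\x_i))-\Delta_{\x_i}$, is then immediate from the disjoint decomposition $\F(\x_i)=\F(\x_i\oplus 0)\sqcup\F(\x_i\oplus 1)$.

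The one point needing care is reconciling the algorithm's plain sum over $j<i$ with the lattice-restricted sum over $\x_j\le\x_i$ that actually arises from \pref{eq:wtof}; this alignment is exactly what the monotone ordering together with non-negativity of the $W$-coefficients purchases, since any $\x_j$ with $j<i$ but $\x_j\not\le\x_i$ contributes to neither sum (the corresponding $\y$ does not satisfy $\y\le\x_i$). I expect no other non-trivial obstacle---the rest is a direct chase through the $W$-transform.
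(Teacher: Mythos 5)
Your main derivation coincides with the paper's: invert via \pref{eq:wtof} to get $F^1_i-F^0_i=\sum_{\y\le\x_i}w\left(\F(\y\oplus 1)\right)$, discard the zero-weight parts using non-negativity of the $W$-coefficients, and induct along the weight-ordered list so that every $\x_j<\x_i$ has already been assigned $\Delta_{\x_j}=w\left(\F(\x_j\oplus 1)\right)$. Up to that point the argument is the paper's own proof, correctly reproduced.

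The problem is your closing ``reconciliation''. An index $j<i$ with $\x_j\not\le\x_i$ \emph{does} contribute to the plain sum $\sum_{j<i}\Delta_{\x_j}$: its contribution is $\Delta_{\x_j}=w\left(\F(\x_j\oplus 1)\right)$, which need not vanish, so your parenthetical ``contributes to neither sum'' is false and the two sums genuinely differ. Concretely, take $m=3$, $k=2$, and the coverage function supported on $\{3\},\{1,3\},\{2,3\}$ with unit weights; the positive parts are $(0,0),(1,0),(0,1)$, listed in this order. When processing $\x_3=(0,1)$ one has $F^1_3-F^0_3=f(\{1,3\})-f(\{1\})=2=w\left(\F((0,0)\oplus 1)\right)+w\left(\F((0,1)\oplus 1)\right)$, and the identity requires subtracting only the $(0,0)$ term, but the plain sum also subtracts $\Delta_{(1,0)}=w\left(\F((1,0)\oplus 1)\right)=1$ and yields $0$ instead of the true value $1$. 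The correct resolution is that step \ref{1} must be read as subtracting $\sum_{j:\,\x_j<\x_i}\Delta_{\x_j}$ over the lattice order (which is why $\le$ and $<$ on vectors are defined immediately before the procedure); the paper's proof implicitly does this, since it identifies the second term of \pref{eq:3} --- a sum over $\y<\x_i$ only --- with its $\sum_{j<i}$. Under that reading no reconciliation is needed and your argument is complete; under the literal plain-sum reading you adopted, the final step is not a notational nicety but an equality that fails, so as written the proof has a genuine error precisely at the point you flagged.
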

\begin{proof}
It suffices to show that $\Delta_{\x_i} = w\left(\F(\x_i\oplus 1)\right) = \sum_{S:S\cap[k]=\x_i, k+1\in S} w(S)$.
The RHS equals
\begin{equation}\label{eq:3}
~\sum_{S:S\cap [k] \subseteq\x_i, ~k+1\in S} w(S) ~-~\sum_{\y < \x_i}~\sum_{S\cap[k] = \y, k+1\in S} w(S).
\end{equation}
\noindent
The first term above equates to 
$$\sum_{S:S\cap [k]\setminus \x_i = \emptyset, \atop k+1\in S} w(S) = \sum_{S:S\cap ([k]\setminus\x_i \cup k+1) \neq \emptyset} w(S) ~~- \sum_{S:S\cap ([k]\setminus\x_i)\neq \emptyset} w(S)~~= ~~F_i^1 - F_i^0$$
Note that the summation $\sum_{S\cap [k] = \y,k+1\in S} w(S)$ equals $0$
if $w(\F(\y)) = \sum_{S\cap[k]=\y}w(S)$ equals zero since $w(S)\ge 0$ for all $S$. Therefore, the second term in \eqref{eq:3} is precisely $\sum_{j<i}w(\F(\x_j\oplus 1))$.
If $i=1$, then this is $0$; for other $i$ this  equates to $\sum_{j< i}\Delta_{\x_j}$ by induction.

\end{proof}

\begin{theorem}\label{thm:learn-cov}
Given value oracle access to a coverage function $f$ with positive weight sets $\{S_1,\ldots,S_n\}$,
the procedure {\bf Recover Coverage} returns the correct weights with $O(mn)$ queries to the oracle.
\end{theorem}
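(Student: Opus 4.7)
The plan is to combine the correctness of the \textbf{Refine} subroutine (already established in the preceding claim) with a counting argument that bounds the number of surviving parts in any iteration by $n$, thereby yielding the $O(mn)$ query bound.

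First, I would establish correctness. By the Refine claim, given the input $\{w(\F(\x)) : \x \in \{0,1\}^k\}$ at the start of iteration $k$, the algorithm correctly computes $w(\F(\x\oplus 0))$ and $w(\F(\x\oplus 1))$ for every $\x$ it refines. The only subtlety is the pruning step: parts with $w(\F(\x)) = 0$ are discarded without being refined. This is valid precisely because $f$ is coverage, so by \pref{thm:char-cov} all $W$-coefficients satisfy $w(S) \geq 0$; hence $w(\F(\x)) = \sum_{S \in \F(\x)} w(S) = 0$ forces $w(S) = 0$ for every $S \in \F(\x)$, and no $S_i$ can lie in $\F(\x)$. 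After $m$ iterations each surviving $\F(\x)$ with $\x \in \{0,1\}^m$ is a singleton $\{S\}$ whose stored weight is exactly $w(S)$, and these are precisely $S_1,\ldots,S_n$.

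Next I would bound the query complexity. The core observation is that at the start of iteration $k+1$, a part $\F(\x)$ survives only if $w(\F(\x)) > 0$, which by non-negativity of $W$-coefficients requires $\F(\x)$ to contain at least one of the support sets $S_1,\ldots,S_n$. Since $\{\F(\x) : \x \in \{0,1\}^k\}$ is a partition of $2^{[m]}$, the surviving parts are pairwise disjoint subsets of $2^{[m]}$ each containing some $S_i$, so there are at most $n$ of them. Each surviving part incurs exactly two oracle queries ($F^0_i$ and $F^1_i$) in the next call to \textbf{Refine}, so iteration $k+1$ uses at most $2n$ queries. Summing over $m$ iterations gives $O(mn)$ queries, as claimed.

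I do not anticipate a real obstacle here: the heavy lifting has already been done in the Refine claim, and the query bound is a direct consequence of the ``every surviving part contains an $S_i$'' invariant. The one thing worth being careful about is the book-keeping in \textbf{Refine} — the order of processing the $\x_i$'s by increasing number of $1$'s is what makes the inductive subtraction $\Delta_{\x_i} = F^1_i - F^0_i - \sum_{j<i}\Delta_{\x_j}$ correspond exactly to $w(\F(\x_i\oplus 1))$, and the non-negativity of $W$-coefficients is what lets us safely drop the contribution from already-pruned $\y < \x_i$ (they contribute $0$ to the relevant sum). Once this is verified, the theorem follows by straightforward induction on $k$.
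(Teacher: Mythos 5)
Your proposal is correct and follows essentially the same approach as the paper: use nonnegativity of the $W$-coefficients of a coverage function to justify pruning zero-weight parts, and bound the number of queries by observing that at most $n$ parts can survive each iteration. The only cosmetic difference is how the $\le n$ bound is phrased — the paper says the count of surviving parts is at most the count at termination, while you argue directly that each surviving part must contain some $S_i$ and the parts are disjoint; these are the same observation.
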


\begin{proof}
Whenever a certain $w(\F(\x))$ evaluates to $0$, we can infer  that $w(S) = 0$ for all $S\in \F(\x)$ since
$f$ is a coverage function. It is also clear that the algorithm terminates in $m$ steps since the partition refines to singleton sets. The number of oracle accesses is proportional (twice) to the number of calls to the {\bf Refine} subroutine. The latter is at most $mn$ since in each iteration the number of parts remaining is at most the number of parts remaining in the end.
\end{proof}

\begin{corollary}\label{cor:cor1} 
    Given any $n$, there exists a $O(mn + \epsilon^{-1})$ time tester
    which will return {\sc Yes} for coverage functions having
    $W$-support size at most $n$, and return {\sc No} with
    $\Omega(1)$ probability for functions that are $\epsilon$-far from
    the set of coverage functions with $W$-support at most $n$.
\end{corollary}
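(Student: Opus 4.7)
The plan is to bolt a random sampling stage onto the reconstruction algorithm \textbf{Recover Coverage} of \pref{thm:learn-cov}, with two early-abort safeguards that detect non-coverage or over-large support during reconstruction. Concretely, I would run \textbf{Recover Coverage} on $f$ with the following modifications: (i) if at any point a computed $\Delta_{\x_i}$ or the quantity $w(\F(\x_i)) - \Delta_{\x_i}$ is negative, halt and output {\sc No}; (ii) if the list $L$ of parts with positive weight ever grows beyond $n$, halt and output {\sc No}. By \pref{thm:char-cov}, if $f$ is a coverage function then every $W$-coefficient is non-negative, so the identity $\Delta_{\x_i} = w(\F(\x_i \oplus 1)) \ge 0$ proved in \pref{thm:learn-cov} shows that abort (i) never fires on coverage inputs. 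Moreover, the surviving parts at iteration $k$ correspond to the distinct patterns $\{S \cap [k] : w(S) > 0\}$, whose number is at most $|\mathrm{supp}(f)| \le n$, so abort (ii) never fires either.

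If neither abort triggers, \textbf{Recover Coverage} terminates with a list of support sets $S_1,\ldots,S_{n'}$ with $n' \le n$ and strictly positive weights $w(S_j)$, which by \pref{thm:char-cov} define a genuine coverage function
\[
g(T) \;=\; \sum_{j:\, S_j \cap T \ne \emptyset} w(S_j)
\]
whose $W$-support has size at most $n$. The tester then draws $\Theta(\epsilon^{-1})$ subsets $T \subseteq [m]$ uniformly at random, queries $f(T)$, and outputs {\sc No} if any sample has $f(T) \ne g(T)$, and {\sc Yes} otherwise. Completeness is immediate: if $f$ is coverage with $W$-support of size at most $n$, then by \pref{thm:learn-cov} the procedure recovers $g = f$ exactly and the sampling stage never detects a discrepancy. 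For soundness, suppose $f$ is $\epsilon$-far from every coverage function with $W$-support of size at most $n$. Either an abort fires (and we correctly say {\sc No}), or we reach the sampling stage with a candidate $g$ that is itself such a coverage function; by hypothesis $\Pr_T[f(T) \ne g(T)] \ge \epsilon$, so $\Theta(\epsilon^{-1})$ samples miss every disagreement with probability only $(1-\epsilon)^{\Theta(1/\epsilon)} = e^{-\Theta(1)}$, giving $\Omega(1)$ detection probability.

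The total query complexity is $O(mn)$ for reconstruction plus $O(\epsilon^{-1})$ for verification, yielding the claimed $O(mn + \epsilon^{-1})$ bound. The only place that requires care is the status of \textbf{Recover Coverage} on non-coverage inputs: its Refine routine is defined purely by arithmetic on queried $f$-values and makes sense without assuming $f$ is coverage, so the abort checks are well-defined; the role of the aborts is precisely to certify that the object being built is a bona fide coverage function with $W$-support at most $n$, at which point the sampling stage suffices to distinguish $f = g$ from $f$ disagreeing with $g$ on an $\epsilon$-fraction of inputs. No further difficulty arises.
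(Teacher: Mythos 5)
Your proposal is correct and follows essentially the same route as the paper: run the reconstruction of \pref{thm:learn-cov}, output {\sc No} if a negative weight appears, and otherwise verify the reconstructed coverage function against $O(\epsilon^{-1})$ random samples. Your explicit abort when the number of live parts exceeds $n$ is a detail the paper's terse proof leaves implicit, but it is the right safeguard to keep the query count at $O(mn)$ on non-coverage inputs, so this is a (welcome) elaboration rather than a different approach.
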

\begin{proof}
Run the reconstruction algorithm described above. If we get a set with negative weight, return {\sc No}.  If we succeed, then if $f$ is truly a coverage function, we have derived the unique weights. 
We sample $O(\eps^{-1})$ random sets and compare the value of our computed function with that of the oracle; if the function is $\eps$-far from coverage, then we will catch it with probability $O(1)$.
\end{proof}




\begin{theorem} \label{thm:coveragelower}
   Reconstructing coverage functions on $m$ elements with $W$-support  size $n$
     requires at least $\Omega(mn/\log n)$ probes.
\end{theorem}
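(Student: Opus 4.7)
\begin{proofof}{\pref{thm:coveragelower}}
The plan is a standard information-theoretic decision-tree argument. The idea is to exhibit a large family $\calF$ of distinct coverage functions of $W$-support size $n$, together with a uniform upper bound on the number of distinct answers any single value-oracle query can produce on $\calF$. A deterministic reconstruction algorithm, restricted to inputs in $\calF$, must produce a distinct transcript for each $f\in\calF$; counting transcripts then bounds the query complexity from below.

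First I would construct $\calF$. For every $n$-element set $\calS=\{S_1,\dots,S_n\}\subseteq 2^{[m]}\setminus\{\emptyset\}$, define $f_{\calS}$ to be the unique set function with $W$-coefficients $w(S_i)=1$ for $i\le n$ and $w(S)=0$ otherwise. By \pref{thm:char-cov}, every $f_{\calS}$ is coverage, and, by the discussion following the theorem, its $W$-support is exactly $\calS$, so $|\calF|=\binom{2^m-1}{n}$. Because the weights are $0/1$, the identity \eqref{eq:wtof} yields $f_{\calS}(T)=\#\{i:S_i\cap T\neq\emptyset\}\in\{0,1,\dots,n\}$ for every query $T$, so each probe reveals one of at most $n+1$ possible values.

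Second I would invoke the decision-tree bound. Any deterministic algorithm issuing at most $Q$ adaptive queries can be represented as a tree of depth $Q$ with branching factor at most $n+1$, hence at most $(n+1)^Q$ leaves. Correct reconstruction requires distinct $f_{\calS}\in\calF$ to reach distinct leaves (otherwise the algorithm outputs the same description for two different functions), so
\begin{equation*}
(n+1)^Q \ \ge\ |\calF|\ =\ \binom{2^m-1}{n}\ \ge\ \left(\frac{2^m-1}{n}\right)^{\!n}\ \ge\ 2^{n(m-\log_2 n-1)},
\end{equation*}
assuming the non-degenerate regime $n\le 2^{m-1}$ (otherwise the claim is vacuous). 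Taking logarithms and rearranging gives
\begin{equation*}
Q\ \ge\ \frac{n(m-\log_2 n-1)}{\log_2(n+1)}\ =\ \Omega\!\left(\frac{mn}{\log n}\right),
\end{equation*}
valid whenever $m=\Omega(\log n)$, which covers the whole interesting range since $\log n\le m$ for any $W$-support inside $2^{[m]}$.

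The only real subtlety, and the step I expect to require the most care to state cleanly, is the bound on the number of distinct answers per query: one must ensure the constructed family genuinely forces the oracle to be integer-valued in a bounded range, so that the branching factor of the decision tree is $n+1$ rather than uncountable. Choosing unit weights handles this cleanly. A randomized extension, if desired, would follow from a routine Yao-style argument applied to the uniform distribution on $\calF$ together with Fano's inequality, giving the same asymptotic bound with constants decaying in the required success probability.
\end{proofof}
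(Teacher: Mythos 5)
Your argument is correct and is essentially the paper's own proof: both exhibit roughly $2^{\Theta(mn)}$ unit-weight coverage functions of $W$-support size $n$ (the paper counts these as bipartite "balls-in-bins" configurations, giving $\binom{2^m+n-1}{n-1}$, while you count $n$-element subsets of $2^{[m]}\setminus\{\emptyset\}$, giving $\binom{2^m-1}{n}$ — the same quantity up to constants), and both observe that each query returns an integer in $\{0,\dots,n\}$, hence at most $\log(n+1)$ bits, to conclude $\Omega(mn/\log n)$ queries are needed. Your decision-tree phrasing merely makes explicit the information-theoretic step that the paper states informally.
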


\begin{proof}
    Consider the bipartite graphs with $m$ and
    $n$ vertices on the $A$ and $U$ side.
     Let the weight be $1$ on all vertices in $U$.
    Each non-isomorphic (on permutation of the $U$ vertices) maps to a different coverage function over the $A$ side: the neighborhood of a vertex $A_i \in A$ is precisely the elements it contains.
    Note each such 
     graph corresponds to a way of allocating $n$
    identical balls ($U$-side vertices) into $2^m$ different bins 
    (different choice of set of adjacent $A$-side vertices). This number
    is at least
    ${2^m+n-1 \choose n-1} \ge \left(\frac{2^m}{n}\right)^{n-1}$.

    Hence, we need at least $\Omega(mn)$ bits of information. Notice
    that each probe of function value only provides $O(\log n)$ bits of
    information since the function value is always an integer between
    $0$ and $n$, we get the lower bound in \pref{thm:coveragelower}.
\end{proof}

\section{Testing Coverage Functions is Hard?}\label{sec:far}

In this section we demonstrate a set function whose $W$-distance to
coverage functions is `large', but it takes exponentially many queries to
distinguish from coverage functions. In particular, the function has
$W$-coefficients $w(S) = -1$ if $|S| > k := k(m)$, and $w(S) = N$ if $|S| \le k$,
where $N$ is a positive integer and $k(m)$ is a growing function of $m$,
which will be precisely determined later. Let this function be called
$f^*$.

Firstly, observe that from \pref{eq:maindef} it follows that $w(S)$ can be precisely determined by querying the $2^{|S|}$ sets in $\{T: T\cup S=\m\} = \{\bar{S} \cup X: X\subseteq S\}$.
It follows that $f^*$ can be distinguished from coverage using $2^{k+1}$ queries. 

In this section we show an almost tight lower bound: Any tester which makes less than $2^k$ queries cannot distinguish $f^*$ from a coverage function. Our bound is information theoretic and holds even if the tester has infinite computation power. 
More precisely, we show that given the value of $f^*$ on a collection of sets $\calJ$ with $|\calJ| < 2^k$, there exists a {\em coverage} function $f$ which has the same values on the sets in $\calJ$. 
%

%

\begin{theorem}\label{thm:notester}
There exists a coverage function consistent with the queries of $f^*$ on $\calJ$ if $|\calJ| < 2^k$.
\end{theorem}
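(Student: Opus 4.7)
My plan is to recast the existence claim as a linear-programming feasibility question, reduce the Farkas dual to a combinatorial moment-vanishing condition, and settle that condition using the dual-distance bound of the Reed-Muller code.

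By \pref{thm:char-cov}, a coverage function agreeing with $f^*$ on $\calJ$ is precisely a non-negative $W$-coefficient assignment $w' : 2^{[m]} \setminus \{\emptyset\} \to \R_{\ge 0}$ satisfying $\sum_{S : S \cap T \ne \emptyset} w'(S) = f^*(T)$ for every $T \in \calJ$. Suppose for contradiction this LP is infeasible. Then Farkas' lemma produces $y \in \R^\calJ$ with $\alpha(S) := \sum_{T \in \calJ :\, T \cap S \ne \emptyset} y_T \le 0$ for every non-empty $S$ and $\sum_T y_T f^*(T) > 0$. Swapping the order of summation, the dual objective rewrites as $\sum_{S \ne \emptyset} w^*(S)\alpha(S) = N \sum_{1 \le |S| \le k} \alpha(S) - \sum_{|S|>k}\alpha(S)$, where $w^*(S)$ are the $W$-coefficients of $f^*$. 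Since each $\alpha(S) \le 0$, for $N$ sufficiently large (a function of $m$ alone) this can be positive only when $\alpha(S)=0$ for every $1 \le |S| \le k$ and $\alpha(S^*) < 0$ for some $|S^*| > k$.

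Next I would reinterpret the constraint as a moment-vanishing condition on $y$. Inclusion-exclusion gives $\alpha(S) = \sum_{\emptyset \ne U \subseteq S} (-1)^{|U|+1} \hat y(U)$ with $\hat y(U) := \sum_{T \in \calJ,\, T \supseteq U} y_T$, and a straightforward induction on $|S|$ shows that $\alpha(S) = 0$ for $1 \le |S| \le k$ is equivalent to $\hat y(U) = 0$ for $1 \le |U| \le k$. I then extend $y$ to a signed measure $\tilde y$ on $\calJ \cup \{\emptyset\}$ by setting $\tilde y_\emptyset := -\sum_{T \in \calJ} y_T$; this also kills the zeroth moment, so $\hat{\tilde{y}}(U) = 0$ for every $|U| \le k$, while $|\operatorname{supp}(\tilde y)| \le |\calJ|+1 \le 2^k$.

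The final ingredient is the classical fact that the Reed-Muller code $\mathrm{RM}(k, m)$ has dual distance $2^{k+1}$: any non-zero real signed measure on $\{0,1\}^m$ that is orthogonal to every multilinear monomial of degree at most $k$ has support of size at least $2^{k+1}$. Since $|\operatorname{supp}(\tilde y)| \le 2^k < 2^{k+1}$, this forces $\tilde y \equiv 0$; in particular $y \equiv 0$ on $\calJ$ and $\alpha \equiv 0$, contradicting $\alpha(S^*) < 0$. Therefore the primal LP is feasible and produces the desired coverage function. The main obstacle is spotting the link between the Farkas dual (after $W$-transform bookkeeping) and the Reed-Muller dual-distance bound; once this reduction is in place, the LP manipulation and the combinatorial finish are routine.
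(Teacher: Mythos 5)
Your overall architecture is sound, and the second half of your argument is a genuinely different (and attractive) route: where the paper, after forcing the dual weights to vanish on small sets, runs a minimal-set parity argument (\pref{lem:mostgtspositive}) to produce $2^k$ sets on which the dual function is strictly positive, you instead observe that vanishing of $\alpha(S)$ for all $1\le|S|\le k$ is equivalent to vanishing of all degree-$\le k$ moments of the signed measure $y$ supported on $\calJ$ (plus a correction at $\emptyset$), and then invoke a support lower bound of $2^{k+1}$ for nonzero measures with vanishing low-degree moments. That fact is true and, once the rest is in place, would even give the slightly sharper bound $|\calJ|\ge 2^{k+1}-1$. One caveat: what you need is the \emph{real-coefficient} statement (any nonzero real-valued function on $\{0,1\}^m$ orthogonal to all multilinear monomials of degree $\le k$ has support at least $2^{k+1}$), which does not follow formally from the $\F_2$ dual distance of $\mathrm{RM}(k,m)$; it is a known ``null design'' fact and has an easy proof by induction on a split coordinate, but it should be proved or cited in that form.

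The genuine gap is in the first half, at the step ``for $N$ sufficiently large (a function of $m$ alone) this can be positive only when $\alpha(S)=0$ for every $1\le|S|\le k$.'' That implication is false for an arbitrary Farkas certificate $y$: positivity of $N\sum_{1\le|S|\le k}\alpha(S)-\sum_{|S|>k}\alpha(S)$ with all $\alpha(S)\le 0$ only forces $\sum_{1\le|S|\le k}|\alpha(S)|<\tfrac1N\sum_{|S|>k}|\alpha(S)|$, i.e.\ the small-set values are \emph{relatively} tiny, not exactly zero --- and no fixed $N$ depending on $m$ alone rules out certificates with arbitrarily small nonzero $\alpha(S)$ on small sets. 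Your later moment-vanishing reduction needs exact vanishing, and approximate vanishing does not feed into the support bound (nor does a naive limiting argument, since the objective may degenerate to $0$ in the limit). To repair this you must argue that \emph{some} certificate can be chosen with exact vanishing: e.g.\ normalize, pass to a basic feasible (vertex) solution of the dual system in the variables $y_T$, use Cramer's rule to lower-bound every nonzero $\alpha(S)$ (an integer $\{0,1\}$-combination of the $y_T$) by an explicit $1/D$ with $D\le(2^m)!$, and then take $N>2^mD$. This is exactly the content and proof technique of \pref{lem:mostalphaszero} in the paper, adapted to your formulation; it is a real missing idea rather than bookkeeping, though once added your proof goes through.
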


\begin{corollary}\label{cor:cert}
Any certificate of non-coverageness of $f^*$ must be of size at least $2^k$.
\end{corollary}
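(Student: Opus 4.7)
The plan is to observe that \pref{cor:cert} is immediate from \pref{thm:notester} by contrapositive: any certificate of non-coverageness with $|\calJ| < 2^k$ would, by the theorem, still be matched on its queries by some coverage function, contradicting the certification. So the real task is the theorem, which I would prove via LP duality. By \pref{thm:char-cov}, a coverage function matching $f^*$ on $\calJ$ exists iff the LP ``find $w' \in \R_{\ge 0}^{2^{[m]}\setminus\emptyset}$ with $\sum_{S \,:\, S \cap T \neq \emptyset} w'(S) = f^*(T)$ for every $T \in \calJ$'' is feasible. Farkas's lemma reduces infeasibility to the existence of a dual witness $y \in \R^{\calJ}$ satisfying $\sigma(S) := \sum_{T \in \calJ \,:\, T \cap S \neq \emptyset} y_T \le 0$ for every $S \neq \emptyset$, together with $\sum_{T \in \calJ} y_T f^*(T) > 0$.

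Next I would exploit the structure of $f^*$. Writing the dual objective as $\sum_{S \neq \emptyset} w^*(S)\sigma(S)$ and plugging in $w^*$, the positivity condition becomes $\sum_{|S|>k}(-\sigma(S)) > N \sum_{1 \le |S| \le k}(-\sigma(S))$. Both sides are non-negative (since $\sigma \le 0$), and the LHS is bounded independently of $N$; taking $N$ sufficiently large forces every dual witness to have $\sigma(S) = 0$ on $1 \le |S| \le k$, while $\sigma \not\equiv 0$ on $|S| > k$. By inclusion--exclusion and induction on $|S|$, this vanishing of $\sigma$ on small sets is equivalent to $h_A := \sum_{T \in \calJ,\, T \supseteq A} y_T = 0$ for every $A$ with $1 \le |A| \le k$.

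The heart of the proof is a polynomial-support lower bound. Form the multilinear polynomial $P(x) = \sum_{T \in \calJ} y_T x^T$; a direct computation shows that in the shifted variables $u = x - \vec{1}$ it satisfies $P(\vec{1} + u) = \sum_A h_A u^A$, which under our constraints collapses to $h_\emptyset + \sum_{|A| > k} h_A u^A$. Set $R := P - h_\emptyset$ and suppose $R \not\equiv 0$. Pick $A^*$ in the $u$-support of $R$ with $|A^*|$ minimum (so $|A^*| \ge k+1$) and restrict $R$ to the sub-cube $\{x \,:\, x_i = 1 \text{ for } i \notin A^*\}$. Every $u^A$ with $A \not\subseteq A^*$ vanishes there, and by the minimality of $|A^*|$ every $A \subsetneq A^*$ already has $h_A = 0$, so the restriction equals $h_{A^*}\prod_{i \in A^*}(x_i - 1)$, whose expansion as a polynomial in $|A^*|$ variables has all $2^{|A^*|}$ monomials nonzero. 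Pulling back to the ambient polynomial, the coefficient of each sub-cube monomial $x^{T'}$ is $\sum_{T'' \subseteq [m] \setminus A^*} q_{T' \cup T''}$, where $q_T$ denotes the $x$-coefficient of $R$ at $T$; its nonvanishing forces some $T' \cup T''$ into the $x$-support of $R$, and since different $T'$'s pick out disjoint cosets $\{T \,:\, T \cap A^* = T'\}$, one collects at least $2^{|A^*|} \ge 2^{k+1}$ distinct $x$-monomials in $R$'s support.

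Adding back $h_\emptyset$ only shifts the $x^\emptyset$ coefficient, so $P$'s $x$-support has size at least $2^{k+1} - 1 \ge 2^k$; since this support is contained in $\calJ$, we conclude $|\calJ| \ge 2^k$, contradicting $|\calJ| < 2^k$. The main obstacle is the polynomial-support lemma: once one finds the ``restrict to the minimum-$|A^*|$ sub-cube'' move that isolates the leading $u$-monomial, the rest is routine LP setup, the large-$N$ reduction, and bookkeeping.
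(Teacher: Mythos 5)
Your reduction of \pref{cor:cert} to \pref{thm:notester} is exactly the paper's (the corollary is indeed immediate by contrapositive), and your Farkas setup with the weights $w'(S)$ as primal variables and a dual witness $y\in\R^{\calJ}$ is a legitimate, in fact cleaner, reformulation. The genuine gap is in the large-$N$ step. You assert that ``taking $N$ sufficiently large forces \emph{every} dual witness to have $\sigma(S)=0$ on $1\le|S|\le k$,'' justified by ``the LHS is bounded independently of $N$.'' This does not hold: the witness cone is scale-invariant, so neither side is bounded without a normalization, and even after normalizing $y$, the quantity $\sum_{1\le|S|\le k}(-\sigma(S))$ can be nonzero yet arbitrarily small, since each $\sigma(S)$ is a signed subsum of the $y_T$'s and can nearly cancel. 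Since $N$ is fixed as part of $f^*$ before $\calJ$ is chosen, no fixed $N$ excludes witnesses with tiny nonzero $\sigma$ on small sets, so the ``every witness'' claim is false and the argument as stated fails. What you actually need is existential --- if some witness exists, then some witness vanishes \emph{exactly} on all small sets --- and that requires a quantitative lower bound on the nonzero coordinates of a suitably chosen witness. This is precisely what the paper's \pref{lem:mostalphaszero} provides: pass to a basic feasible (vertex) optimal solution of a normalized LP whose constraint coefficients lie in $\{-1,0,1\}$, apply Cramer's rule to get every nonzero value at least $1/(2^m)!$, and then $N>(2^m)!$ yields the contradiction. The same repair works in your formulation, but you must normalize by a constraint with $\{0,1\}$ coefficients in the $\sigma$'s (e.g.\ bounding $-\sum_{S\neq\emptyset}\sigma(S)$) rather than by $\sum_{T\in\calJ}y_Tf^*(T)$, whose coefficients contain $N$ and would ruin the Cramer bound.

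The rest of your argument is correct and is a nice variant of the paper's \pref{lem:mostgtspositive}. The identity $P(\mathbf{1}+u)=\sum_A h_A u^A$ for $P(x)=\sum_{T\in\calJ}y_T\prod_{i\in T}x_i$, the equivalence (by inclusion--exclusion and induction) of ``$\sigma(S)=0$ for all $1\le|S|\le k$'' with ``$h_A=0$ for all $1\le|A|\le k$,'' and the subcube-restriction argument at a minimal support set $A^*$ are all sound. Where the paper takes a minimal $S^*$ with $\alpha(S^*)>0$ and uses the sign condition $g(T)\le 0$ off $\calJ$ to locate, for each odd $X\subseteq S^*$, a queried set in the fiber $\{T:T\cap S^*=X\}$, your version locates one nonzero coefficient of $P$ in \emph{every} fiber $\{T:T\cap A^*=T'\}$, $T'\subseteq A^*$, using no sign information at all and giving the marginally stronger bound $2^{k+1}-1$. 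So the only missing piece is the forcing lemma above; once it is supplied along the lines of \pref{lem:mostalphaszero}, your proof goes through.
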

Setting $k(m) = m/4$, we get $f^*$ has $W$-distance at least $(1 - e^{-\Theta(m)})$, giving us:
\begin{corollary} \label{cor:wdist}
Any tester distinguishing between coverage functions and functions of $W$-distance as large as $(1 - e^{-\Theta(m)})$ needs at least $2^{\Theta(m)}$ queries.
\end{corollary}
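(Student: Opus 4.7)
The plan is to derive the corollary as a direct specialization of Theorem~\ref{thm:notester} by choosing the parameter $k = k(m)$ so as to simultaneously (i) make the query lower bound $2^k$ exponentially large in $m$ and (ii) force the $W$-distance of the hard function $f^*$ to be $1 - e^{-\Theta(m)}$. The natural choice will be $k := \lfloor m/4 \rfloor$ (any constant fraction of $m$ strictly below $1/2$ would work), and the main task is just to verify both quantities with this choice.

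First I would invoke Theorem~\ref{thm:notester}: for the function $f^*$ with $w(S) = -1$ when $|S| > k$ and $w(S) = N$ when $|S| \le k$, any tester that makes fewer than $2^k$ queries has a coverage function consistent with its observations, hence cannot distinguish $f^*$ from the class of coverage functions with probability better than guessing. Taking $k = \lfloor m/4 \rfloor$ immediately gives the $2^{\Theta(m)}$ lower bound on the number of queries.

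Next I would compute the $W$-distance of $f^*$. The negative $W$-coefficients are precisely those indexed by non-empty subsets $S \subseteq [m]$ with $|S| > k$, so the $W$-distance equals
\[
\frac{\sum_{j=k+1}^{m} \binom{m}{j}}{2^m - 1}.
\]
A standard Chernoff bound for the binomial distribution gives $\sum_{j=0}^{\lfloor m/4 \rfloor} \binom{m}{j} \le 2^m \cdot e^{-c m}$ for an absolute constant $c > 0$ (e.g., $c = 1/8$). Therefore the fraction of negative $W$-coefficients is at least $1 - e^{-\Theta(m)}$, as claimed.

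Combining the two estimates: with $k = \lfloor m/4 \rfloor$, the function $f^*$ has $W$-distance at least $1 - e^{-\Theta(m)}$ from the class of coverage functions, yet Theorem~\ref{thm:notester} shows that no tester making fewer than $2^{\lfloor m/4 \rfloor} = 2^{\Theta(m)}$ queries can distinguish $f^*$ from a coverage function. This is exactly the statement of the corollary; there is no real obstacle here since both ingredients are already in place, and the only care needed is in picking $k$ so that the two $\Theta(m)$ rates match up with constants of the reader's choosing.
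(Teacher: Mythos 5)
Your proposal is correct and matches the paper's argument exactly: the paper also sets $k(m) = m/4$, notes that the $W$-distance of $f^*$ is then $1 - e^{-\Theta(m)}$ (by the same tail-of-the-binomial estimate you spell out), and reads off the $2^{\Theta(m)}$ query lower bound from Theorem~\ref{thm:notester}. You have simply made explicit the Chernoff-bound calculation that the paper leaves implicit.
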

\noindent
We give a sketch of the proof before diving into the details.
Suppose a tester queries the collection $\calJ$. We first observe that the existence of a coverage function consistent with the queries in $\calJ$ can be expressed as a set of linear inequalities. Using Farkas' lemma, we get a certificate of the {\em non-existence} of such a completion. 
This certificate, at a high level, corresponds to an assignment of values on the $m$-dimensional hypercube satisfying certain linear constraints. We show that if the parameter $N$ is properly chosen, most of these assignments can be assumed to be $0$. In the next step we use this property to show that unless the size of $|\calJ| \ge 2^k$, all the assignments need to be $0$ which contradicts the Farkas linear constraints, thereby proving the existence of the coverage function consistent with $\calJ$.

\subsection{Consistent Coverage Functions and Farkas Lemma}
Recall, from \pref{thm:char-cov}, a function $f:2^\m \mapsto \R_{\ge 0}$
is coverage iff it satisfies
\begin{align*}
    \forall S \subseteq \m: & &
    \textstyle
    \sum_{T : S \cup T = \m} (-1)^{\abs{S
    \cap T} + 1} f(T) & ~~\ge ~~0 \\
    \forall T \subseteq \m: & & f(T) & ~~\ge~~ 0
\end{align*}
Let $\calJ$ be the collection of sets on which the function $f^*$ has been queried.  Define 
$$\textstyle b(S) :=\sum_{T \in \calJ : S \cup T = \m} (-1)^{\abs{S \cap
T}} f^*(T) $$
\noindent
Therefore, if we can find assignments $f:2^\m\setminus \calJ \mapsto \R_{\ge 0}$ satisfying:
\begin{align}
    \forall S \subseteq \m: & & \textstyle \sum_{T \notin \calJ : S \cup
    T = \m} (-1)^{\abs{S \cap T} + 1} f(T) &~~ \ge ~~b(S) \label{eq:1} \\
    \forall T \notin \calJ: & & f(T) & ~~\ge ~~0 \label{eq:2}
\end{align}
\noindent
we can complete the queries on $\calJ$ to a coverage function. Applying
Farkas' lemma (see for instance \cite{BT}), we see that there is {\em
no} feasible solution to \pref{eq:1}, \pref{eq:2} if and only if there is a feasible solution 
$\alpha:2^\m \mapsto \R_{\ge 0}$ satisfying:
\begin{align}
    & & \textstyle \sum_{S \subseteq \m} \alpha(S) b(S) & ~~> ~~0
    \label{eq:f1} \\
    \forall T \notin \calJ : & & \textstyle \sum_{S : S \cup T = \m} 
    (-1)^{\abs{S \cap T} + 1} \alpha(S) & ~~\le ~~0 \label{eq:f2}\\
    \forall S \subseteq \m : & & \alpha(S) & ~~\ge ~~0 \label{eq:f3}
\end{align}

Now we define the parameter $N$ for the function $f^*$; let $N$ be any integer larger than $(2m)!$.
Note that this makes the values doubly exponential, but we are
interested in the power of an all powerful tester.  In the next lemma we
show that one can assume there is a feasible solution to \pref{eq:f1},
\pref{eq:f2}, and \pref{eq:f3} with half of the $\alpha(S)$'s set to $0$. 

\begin{lemma}\label{lem:mostalphaszero}
    If there exists $\alpha$ satisfying \pref{eq:f1}, \pref{eq:f2},
    and \pref{eq:f3}, then we may assume $\alpha_S = 0$ for all $S$
    such that $\abs{S} \le k$.
\end{lemma}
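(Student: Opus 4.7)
The plan is to exploit the polynomial-in-$N$ structure of (f1) and force an algebraic identity that directly pins $\alpha$ to zero on small sets. First decompose $f^*(T) = Na(T) - d(T)$ with $a(T) := |\{S' : 1 \le |S'| \le k,\ S' \cap T \neq \emptyset\}|$ and $d(T) := |\{S' : |S'| > k,\ S' \cap T \neq \emptyset\}|$. Setting $\tilde\alpha(T) := \sum_{S : S \cup T = \m}(-1)^{|S \cap T|+1}\alpha(S)$, a direct resummation rewrites
\[
\sum_S \alpha(S) b(S) \;=\; -\sum_{T \in \calJ} f^*(T)\,\tilde\alpha(T) \;=\; N\,P(\alpha) + R(\alpha),
\]
where $P(\alpha) := -\sum_{T \in \calJ} a(T)\tilde\alpha(T)$ and $R(\alpha) := \sum_{T \in \calJ} d(T)\tilde\alpha(T)$, and (f1) reads $NP(\alpha) + R(\alpha) > 0$.

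The heart of the argument is the claim $P(\alpha) \le 0$ whenever (f2) and (f3) hold. Expand $a(T) = \sum_{1 \le |S'| \le k}\mathbf{1}[S' \cap T \neq \emptyset]$ and swap the order of summation; the inner sum $\sum_{T \in \calJ,\, T \cap S' \neq \emptyset}\tilde\alpha(T)$ is reduced using the inverse $W$-transform identity $\alpha(S') = \sum_{T : T \cap S' \neq \emptyset}\tilde\alpha(T)$, which is a consequence of the inverse matrix formula $M^{-1}(S',T)=\mathbf{1}[S'\cap T\neq \emptyset]$ noted in \pref{sec:bipbase}. Splitting $T$ according to membership in $\calJ$ and swapping back yields
\[
P(\alpha) \;=\; -\sum_{1 \le |S'| \le k}\alpha(S') \;+\; \sum_{T \notin \calJ} a(T)\,\tilde\alpha(T).
\]
Both summands are non-positive: the first because $\alpha(S')\ge 0$ by (f3), and the second because $a(T)\ge 0$ while $\tilde\alpha(T)\le 0$ for $T \notin \calJ$ by (f2). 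Hence $P(\alpha) \le 0$.

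Finally, the enormity of $N > (2m)!$ forces $P(\alpha) = 0$. Indeed, if $P(\alpha) < 0$ strictly, then (f1) demands $R(\alpha)/|P(\alpha)| > N$, so the linear program $\max R(\alpha)$ subject to $P(\alpha) = -1$, (f2), (f3) has value exceeding $(2m)!$. If this LP is bounded, a basic-feasibility / Cramer's-rule estimate on its $\{-1,0,+1\}$-entry constraint matrix bounds its value by at most $(2m)!$, a contradiction. If it is unbounded, a recession direction $\delta \ge 0$ satisfies (f2), (f3), $P(\delta)=0$, and $R(\delta)>0$; such $\delta$ is itself a feasible Farkas certificate for (f1)-(f3), already with the desired property. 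Either way we obtain a feasible $\alpha$ with $P(\alpha)=0$. Since $P(\alpha)$ has been written as the sum of two non-positive quantities, each must vanish, and in particular $\sum_{1\le|S'|\le k}\alpha(S')=0$; combined with $\alpha\ge 0$, this forces $\alpha(S')=0$ for every $1\le|S'|\le k$, as claimed. The main technical obstacle is the quantitative LP bound used in step three; the explicit threshold $N>(2m)!$ is calibrated precisely to beat it.
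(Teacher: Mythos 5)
Your reformulation is a genuinely different route from the paper's, and most of it is sound: the decomposition $f^*(T)=Na(T)-d(T)$, the identity $\sum_S \alpha(S)b(S)=NP(\alpha)+R(\alpha)$, the resummation showing
$P(\alpha)=-\sum_{1\le\abs{S'}\le k}\alpha(S')+\sum_{T\notin\calJ}a(T)\tilde\alpha(T)\le 0$
under \pref{eq:f2} and \pref{eq:f3} (your inversion $\alpha(S')=\sum_{T:T\cap S'\neq\emptyset}\tilde\alpha(T)$ is exactly \pref{eq:alpha-g}), the recession-direction treatment of the unbounded case, and the final deduction from $P(\alpha)=0$ are all correct. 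For comparison, the paper argues differently: it normalizes $\sum_S\alpha(S)=1$, takes a basic optimal solution of the Farkas system itself, lower-bounds every nonzero coordinate by $1/(2^m)!$ via Cramer (legitimately, since there the tight constraints all have $\{-1,0,1\}$ coefficients), and then plays $N$ against the at-most-unit negative mass $\sum_{\abs{S}>k}\alpha(S)\le 1$.

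The gap is in your bounded case. Your auxiliary LP carries the equality constraint $P(\alpha)=-1$, whose coefficient on $\alpha(S)$ is $\pm\sum_{T\in\calJ:\,S\cup T=\m}a(T)$ -- of magnitude up to $\abs{\calJ}\cdot 2^m$, not in $\{-1,0,+1\}$ -- and the objective $R$ has coefficients built from $d(T)\le 2^m$. So the asserted ``Cramer's rule on a $\{-1,0,+1\}$ constraint matrix'' bound of $(2m)!$ on the optimal value does not follow, and there is no reason the value is at most $(2m)!$; with that threshold the contradiction can fail. The step is repairable: at an optimal vertex the Cramer numerator, expanded along the right-hand-side column (which is supported only on the $P$-row), reduces to a determinant of a minor with $\{-1,0,1\}$ entries, so every coordinate is at most about $(2^m)!$ and the optimal value is at most $2^{O(m)}(2^m)!$ -- but then you must take $N$ at least of that order. (Note the paper's own proof in fact uses $N>(2^m)!$; the ``$(2m)!$'' in the text is a typo, and since the construction tolerates any $N$, enlarging it costs nothing.) Alternatively, keep the large numbers out of the constraints entirely, as the paper does, by adding only the normalization $\sum_S\alpha(S)=1$.
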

\noindent
Intuitively, what this lemma says is that the constraint \pref{eq:1} for 
sets of size $\le k$ should not help in 
catching the function not being coverage. This is because the true
function values satisfies the constraints with huge `redundancy': $\sum_{T
: S \cup T = \m} (-1)^{\abs{S \cap T} + 1} f^*(T) = N \gg 0$. Formally,
we can prove the lemma as follows.

\begin{proof}
    Suppose there is an $\alpha$ satisfying \pref{eq:f1},
    \pref{eq:f2}, and \pref{eq:f3}. Then, by scaling we may assume
    that 
    \begin{equation}
        \textstyle \sum_{S\subseteq \m} \alpha(S) = 1\label{eq:f4}
    \end{equation}
    Equivalently, there is a positive valued solution to the
    LP $\{\max \sum_{S\subseteq \m}b(S)\alpha(S):
    \pref{eq:f2}, \pref{eq:f3}, \pref{eq:f4}\}$. Choose $\alpha$ to be
    a basic feasible optimal solution. Such a solution makes $2^m$ of
    the inequalities in \pref{eq:f2}, \pref{eq:f3}, and \pref{eq:f4} tight, and therefore by Cramer's rule, each of the non-zero $\alpha(S) \ge \frac{1}{(2^m)!}$ since all coefficients are $\{-1,0,1\}$. 

    Now we show that if $\alpha$ is basic feasible and $N > (2^m)!$,
    then we must have that $\alpha(S) = 0$ for all $S$ such that $\abs{S}
    \le k$. We first note that $\forall S \subseteq \m$:
    $$w(S) = \sum_{T : S \cup T = \m}
    (-1)^{\abs{S \cap T} + 1} f^*(T) = \sum_{T \notin \calJ : S \cup T =
    \m} (-1)^{\abs{S \cap T} + 1} f^*(T) - b(S) \enspace.$$

Therefore, $\sum_{S \subseteq \m} \alpha(S) b(S) > 0$ and the above
equality imply that
$$\sum_{T \notin \cal J} \sum_{S : S \cup T =
        \m} \alpha(S) (-1)^{\abs{S \cap T} + 1} f^*(T) - \sum_{S \subseteq
        \m} \alpha(S) w_S = \sum_{S \subseteq \m} \alpha(S)
        b(S) > 0 \enspace.$$

    But by \pref{eq:f2}, $\sum_{S \subseteq \m} \alpha(S) (-1)^{\abs{S \cap T} +1} \le
    0$ for all $T \notin \cal J$, and $f^*(T) \ge 0$ for all $T \subseteq
    \m$. So we have that $\sum_{S \subseteq \m} \alpha(S) w(S) < 0$.
    Assume for contradiction that there exists $S_0$, $\abs{S_0} \le k$
    such that $\alpha_{S_0} \ne 0$.  From the earlier discussion we know
    that $\alpha_{S_0} \ge \frac{1}{(2^m)!}$. Therefore, we have 
    $\sum_{S \subseteq \m} \alpha(S) w(S) \ge \frac{1}{(2^m)!} N -
    \sum_{S \subseteq \m : \abs{S} \le k} \alpha(S) > 1 - 1 = 0$, a
    contradiction. The latter inequality follows from \pref{eq:f4} and
    our assumption that $N > (2^m)!$.
\end{proof}

\subsection{Nullity of Farkas Certificate}

In the following discussion, we assume without loss of generality $\alpha(S) = 0$ for all $S$, $\abs{S} \le k$. We will work with the following linear function of the $\alpha$'s. For a set $T$, define
$$\textstyle g(T) := \sum_{S : S \cup T = \m} (-1)^{\abs{S \cap T} +1}\alpha(S)$$
\noindent
From \pref{eq:f2}, we get $g(T) \leq 0$ for all $T\notin \calJ$. Inverting, we get 
\begin{equation}\label{eq:alpha-g}
    \textstyle
\alpha(S) = \sum_{T:T\cap S\neq \emptyset} g(T) ~=~ G - \sum_{T\subseteq\bar{S}} g(T), ~~~~ \textrm{where} ~G:= \sum_{T\subseteq\m}g(T)
\end{equation}
\noindent
We now show that if $\alpha(S) = 0$ for all $|S|\le k$, then $g(T)$ {\em must} be $>0$ for at least $2^k$ sets $T$. This will imply $|\calJ| \ge 2^k$. 
\begin{lemma}\label{lem:mostgtspositive}
If $\alpha(S) = 0$ for all $|S| \le k$, then $g(T) > 0$ for at least $2^k$ subsets $T\subseteq \m$.
\end{lemma}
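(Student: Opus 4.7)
My plan is to prove a strengthened statement by induction on $m$: both $|\{T : g(T) > 0\}| \ge 2^k$ and $|\{T : g(T) < 0\}| \ge 2^k$. The bookkeeping becomes cleaner after rewriting in terms of the upward M\"obius inverse $\tilde\Gamma(U) := \sum_{R \supseteq U}(-1)^{|R \setminus U|}\alpha(R)$, which satisfies $\alpha(V) = \sum_{U \supseteq V}\tilde\Gamma(U)$. Manipulating \pref{eq:maindef} by substituting $R = \bar T \cup X$ inside the defining sum for $g(T)$ gives the clean identity $g(T) = -\tilde\Gamma(\bar T)$, so the strengthened claim reduces to showing $|\{U : \tilde\Gamma(U) < 0\}| \ge 2^k$ and $|\{U : \tilde\Gamma(U) > 0\}| \ge 2^k$.

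The base cases are direct. For $k = 0$, the identity $\sum_U \tilde\Gamma(U) = \alpha(\emptyset) = 0$ combined with $\tilde\Gamma \not\equiv 0$ (forced by $\alpha \not\equiv 0$) immediately supplies a positive and a negative value. For $m = k+1$ with $k \ge 1$, the only $V \subseteq [m]$ with $|V| \ge k+1$ is $V = [m]$, so $\alpha$ is a single point mass and $\tilde\Gamma(U) = (-1)^{k+1-|U|}\alpha([m])$ for every $U \subseteq [m]$, yielding exactly $2^k$ sets of each sign.

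For the inductive step ($m \ge k+2$) we split each $U \subseteq [m]$ by whether $m \in U$: set $\tilde\Gamma_0(U') := \tilde\Gamma(U')$, $\tilde\Gamma_1(U') := \tilde\Gamma(U' \cup \{m\})$ for $U' \subseteq [m-1]$, and analogously $\alpha_0(V') := \alpha(V')$, $\alpha_1(V') := \alpha(V' \cup \{m\})$. A direct check shows that $\alpha_1$ is the upward M\"obius sum of $\tilde\Gamma_1$ on $[m-1]$, that $\alpha_0$ is the upward M\"obius sum of $\tilde\Theta := \tilde\Gamma_0 + \tilde\Gamma_1$, and that the vanishing conditions transfer to $\alpha_0(V') = 0$ for $|V'| \le k$ and $\alpha_1(V') = 0$ for $|V'| \le k-1$. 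We then case-split on which of $\alpha_0, \alpha_1$ vanish identically. When $\alpha_0 \not\equiv 0$, the IH applied to $(\tilde\Theta, \alpha_0)$ at parameter $k$ gives $|\{\tilde\Theta < 0\}| \ge 2^k$; the key observation is that $\tilde\Theta(U') < 0$ forces $\tilde\Gamma_0(U') < 0$ or $\tilde\Gamma_1(U') < 0$ (else their sum is non-negative), so $\{\tilde\Theta < 0\} \subseteq \{\tilde\Gamma_0 < 0\} \cup \{\tilde\Gamma_1 < 0\}$, and since these two sets correspond to disjoint halves of $2^{[m]}$ distinguished by membership of $m$, summing their cardinalities yields the desired bound. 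The case $\alpha_1 \equiv 0$ reduces immediately to the $(m-1)$-instance at parameter $k$. The remaining case $\alpha_0 \equiv 0, \alpha_1 \not\equiv 0$ forces $\tilde\Theta \equiv 0$, hence $\tilde\Gamma_0 = -\tilde\Gamma_1$; applying the IH to $(\tilde\Gamma_1, \alpha_1)$ at parameter $k-1$ produces at least $2^{k-1}$ positive and $2^{k-1}$ negative values of $\tilde\Gamma_1$, so $|\{U : \tilde\Gamma(U) < 0\}| = |\{\tilde\Gamma_1 > 0\}| + |\{\tilde\Gamma_1 < 0\}| \ge 2^k$.

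The main obstacle is identifying the right strengthened statement: tracking only the negative part of $\tilde\Gamma$ fails in the sign-flip case $\tilde\Gamma_0 = -\tilde\Gamma_1$, since there negatives in one half come from positives in the other. Proving both bounds simultaneously is what lets the IH feed back through the induction.
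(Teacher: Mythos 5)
Your argument is correct, but it takes a genuinely different route from the paper's. The paper argues locally: since $\alpha \ge 0$ by \pref{eq:f3}, it picks a minimal set $S^*$ in the support of $\alpha$, necessarily of size $\ge k+1$, and shows by induction on $|X|$ that $\sum_{T\subseteq \bar{S^*}} g(T\cup X) = (-1)^{|X|+1}\alpha(S^*)$ for every nonempty $X\subseteq S^*$; for each odd $X$ this sum is strictly positive, so some summand is positive, and since distinct $X$ give disjoint families of sets $T\cup X$, this already produces $2^{|S^*|-1}\ge 2^k$ sets with $g>0$. You instead pass to $g(T)=-\tilde\Gamma(\bar T)$ (your identity is correct and is just the paper's inversion \pref{eq:alpha-g} in M\"obius form) and run a global induction on $m$ by splitting on the last coordinate, proving the two-sided strengthening that each sign occurs at least $2^k$ times. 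I checked the details: the transfer identities (that $\tilde\Gamma_1$ is the inverse of $\alpha_1$ and $\tilde\Gamma_0+\tilde\Gamma_1$ the inverse of $\alpha_0$), the vanishing thresholds $k$ and $k-1$, the base cases $k=0$ and $m=k+1$, and the injection from $\{\tilde\Theta<0\}$ into the negative sets of $\tilde\Gamma$ in the case $\alpha_0\not\equiv 0$ all hold, and you correctly identify that the sign-flip case $\tilde\Gamma_0=-\tilde\Gamma_1$ is exactly what forces tracking both signs. What your approach buys: it never uses nonnegativity of $\alpha$, only $\alpha\not\equiv 0$ (which both proofs take from \pref{eq:f1}), so you in fact establish a more general and two-sided statement. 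What the paper's approach buys: it is much shorter, exploiting \pref{eq:f3} to localize the whole argument to a single $(k+1)$-element support set instead of recursing over the entire cube; in effect it extracts the same alternating-sign structure that your base case $m=k+1$ exhibits, but embedded inside the subcube spanned by $S^*$.
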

\begin{proof}
Let $S^*$ be any minimal set with $\alpha(S^*) > 0$. Note that $|S^*| \ge k+1$.
From \pref{eq:alpha-g}, we get $\hat{G} := \alpha(S^*) = G -
\sum_{T\subseteq \bar{S^*}} g(T) > 0$.
Consider any $i\in S^*$. By minimality, we have $\alpha(S^*\setminus i) = 0$, giving us
$$\textstyle 0 = G - \sum_{T\subseteq \bar{S^*\setminus i}} g(T) = G - \sum_{T\subseteq\bar{S^*}} g(T) - \sum_{T\subseteq\bar{S^*}} g(T\cup i)$$ 
Therefore for all $i\in S^*$,
$\sum_{T\subseteq \bar{S^*}} g(T\cup i) = \hat{G} > 0$. 
\noindent
By induction, we can extend the above calculation to  any subset $X\subseteq S^*$,
\begin{equation}\label{eq:xx}
    \textstyle
\sum_{T\subseteq \bar{S^*}} g(T\cup X) = (-1)^{|X|+1}\hat{G}
\end{equation}
\noindent
Note that the summands in \pref{eq:xx} are disjoint for different sets $X$, and furthermore,
whenever $|X|$ is odd, the sum is $>0$ implying at least one of the summands must be positive for each odd subset $X\subseteq S^*$. This proves the lemma since $|S^*| = k+1$.
 \smallskip

\noindent
Proof of \pref{eq:xx}: Let's denote the sum $\sum_{T\subseteq \bar{S^*}} g(T\cup X)$ as $h(X)$.
So $\hat{G} = G  - h(\emptyset)$, and by induction, $h(Y) = (-1)^{|Y|+1}\hat{G}$ for every proper subset of $X$. Now, $\alpha(S^*\setminus X) = 0$ gives us
$$\textstyle 0 = G - \sum_{T\subseteq \bar{S^*\setminus X}} g(T) =  G - \sum_{Y\subseteq X} h(Y)$$
\noindent
Rearranging, 
$h(X) = G - \sum_{Y\subsetneq X} h(Y) = \hat{G} - \sum_{i=1}^{|X|-1} {|X|\choose i} (-1)^{i+1} \hat{G} = (-1)^{|X|+1}\hat{G}$
\end{proof}

\begin{proof}[\pref{thm:notester}]
    Suppose there is no consistent completion, implying $\alpha$'s
    satisfying  \pref{eq:f1}, \pref{eq:f2} and \pref{eq:f3}. By
    \pref{lem:mostalphaszero} and \pref{lem:mostgtspositive}, we get
    that if \pref{eq:f2} holds, then $|\calJ| \ge 2^k$.  
\end{proof}

\section{$W$-Distance and Usual Distance}
\label{sec:distances}

We first note that the two notions are unrelated; in particular, 
we show two functions each ``far'' in one notion, but
``near'' in the other. The  proofs of the following two lemmas
are provided in the following subsection.

\begin{lemma}\label{lem:WfarUnear}
There is a function with $W$-distance $1 - e^{-\Theta(m)}$ whose distance 
to coverage is $e^{-\Theta(m)}$.
\end{lemma}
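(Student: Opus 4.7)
\begin{proofof}{\pref{lem:WfarUnear}}
The plan is to construct an explicit function $f$ as a small perturbation of a coverage function so that the usual distance is controlled by sparsity of the perturbation, while the perturbation is rich enough (in a carefully chosen direction) to flip most $W$-coefficients to negative.

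\noindent\textbf{Setup.} Fix a ``flat'' coverage function $g$ whose $W$-coefficients are uniform and small: take $w_g(S) = \eps$ for every non-empty $S \subseteq [m]$, which gives $g(T) = \eps(2^m - 2^{m-|T|})$. Pick a modification support $\calT \subseteq 2^{[m]}$ of size $|\calT| = 2^{\beta m}$ for a suitable $\beta < 1$, for instance $\calT = \{T : |T| \ge m - k\}$ with $k$ chosen so that $|\calT|/2^m = e^{-\Theta(m)}$. Define $f = g + d$ where $d$ is supported on $\calT$.

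\noindent\textbf{Controlling the $W$-coefficients.} Since the $W$-transform is linear, $w_f(S) = w_g(S) + w_d(S) = \eps + \Delta w(S)$, where
\[
    \Delta w(S) ~=~ \sum_{T \in \calT,\, T \supseteq \bar S} (-1)^{|S \cap T| + 1} d(T).
\]
To make $w_f(S) < 0$ on most $S$, it suffices to choose $d$ supported on $\calT$ so that $\Delta w(S) < -\eps$ for all but an $e^{-\Theta(m)}$ fraction of non-empty $S$.

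\noindent\textbf{Designing $d$.} A naive perturbation such as $d(T) = -L \cdot \mathbbm{1}[T = [m]]$, or any $d$ depending only on $|T|$, only succeeds in flipping half of the $W$-coefficients, because the resulting $\Delta w(S)$ is of the form $(-1)^{|S|} \cdot p(|S|)$ for a polynomial $p$, inheriting a parity symmetry in $|S|$. To overcome this, I would take $d$ genuinely depending on the structure of $T$ (not just $|T|$), and build it as an explicit signed combination of the basis vectors $e_T(S) := (-1)^{|S \cap T| + 1}\mathbbm{1}[S \supseteq \bar T]$ for $T \in \calT$. One concrete approach: incrementally add correction terms $d(T)$ for $T \in \calT$ that cancel the ``even-$|S|$'' positive contributions while reinforcing the ``odd-$|S|$'' negative ones, and show inductively that after processing all $T \in \calT$, the resulting $\Delta w(S)$ is below $-\eps$ except on an exponentially small set of $S$ (those for which $\calT$ does not intersect $\{T : T \supseteq \bar S\}$ in a manner providing enough ``balanced'' cancellations).

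\noindent\textbf{Verification.} Finally one checks (i)~$f \ge 0$ pointwise, which holds provided $\|d\|_\infty$ is kept moderate relative to $g$ (achievable by scaling $\eps$ appropriately and keeping $d$'s values bounded), (ii)~$w_f(S) < 0$ on a $(1 - e^{-\Theta(m)})$-fraction of non-empty $S$ by the construction, giving the required $W$-distance, and (iii)~$f = g$ outside $\calT$, so the Hamming distance from $f$ to the coverage function $g$ is at most $|\calT|/2^m = e^{-\Theta(m)}$.

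\noindent\textbf{Main obstacle.} The hard step is the design of $d$. Any symmetric perturbation (depending only on $|T|$) induces a $\Delta w(S)$ whose sign pattern follows $(-1)^{|S|+1}$ times a degree-$k$ polynomial in $|S|$, and such a polynomial can match the sign $(-1)^{|S|}$ at only $O(k)$ consecutive values of $|S|$, forcing at most a constant fraction of $W$-coefficients to be negative. Breaking this parity symmetry by using a $T$-dependent (not $|T|$-dependent) perturbation is delicate: one must verify that the $|\calT|$-dimensional span of $\{e_T : T \in \calT\}$ contains a vector whose coordinates are negative on all but $e^{-\Theta(m)}$ fraction of $S$, which is the technical heart of the argument.
\end{proofof}
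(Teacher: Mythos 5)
Your high-level framing --- obtain the witness as a sparse perturbation of a coverage function, so that the usual distance is bounded by the density of the perturbation's support while the perturbation drives most $W$-coefficients negative --- is just the mirror image of the paper's argument (there one starts from a variant of the far function of \pref{sec:far}, with weight $N$ on $|S|\le m/4$ and on $[m]$ and $-1$ elsewhere, and adds a correction $\Delta f$ vanishing on all sets of size in $[3m/8,5m/8]$ to reach a coverage function), and that reformulation is fine. The genuine gap is that the step which actually proves the lemma is never carried out: you never exhibit a $d$ supported on $\calT$ for which $\Delta w(S)<-\eps$ on all but an $e^{-\Theta(m)}$ fraction of $S$, nor the accompanying bound on $\|d\|_\infty$ needed for your step (i). The ``incrementally add correction terms and show inductively'' passage restates the goal rather than proving it, and your closing paragraph concedes that this is the technical heart. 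Note also that the scaling remark in (i) cannot by itself rescue non-negativity: $\Delta w$ and $\|d\|_\infty$ are both linear in $d$, so rescaling $d$ and $\eps$ together changes nothing, and what is really required is a scale-invariant estimate comparing the minimum of $|\Delta w(S)|$ over the middle band with the magnitudes $d$ must take --- exactly the kind of estimate that is absent.

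Moreover, the obstacle you invoke to justify abandoning symmetric perturbations is not real, and the paper's proof goes through precisely the route you discard. Taking the correction symmetric and supported on sizes $<3m/8$ and $>5m/8$, the induced change in the level-$j$ $W$-coefficient is $(-1)^j\bigl(h_1(j)+h_2(j)\bigr)$, where, via the Mahler (binomial-coefficient) basis ${j\choose i}$, $h_1$ can be an arbitrary polynomial of degree $3m/8-1$ and $h_2$ an arbitrary polynomial of degree $3m/8-2$ times $j(j-1)\cdots(j-5m/8)$. Choosing $h_1$ with roots at the half-integers between $m/4$ and $5m/8$ makes it change sign between every pair of consecutive integers in the middle band, so $(-1)^j h_1(j)\ge 1$ there: matching the parity $(-1)^{|S|}$ only requires degree comparable to the width of the band, which is $\Theta(m)$ and hence available. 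Your ``$O(k)$ consecutive values, hence a constant fraction'' objection tacitly assumes a degree budget $k=o(m)$, which nothing forces (even your own $\calT$ allows $k=\Theta(m)$ while keeping density $e^{-\Theta(m)}$). The second ingredient you are missing is slack where the sign cannot be controlled: the paper's base function carries huge weight $N=5m!$ on $|S|\le m/4$ and on $[m]$, which absorbs the uncontrollable (but at most $4m!$ in magnitude) values of the correction there, while $h_2$ is made dominant on $5m/8<j<m$. Your flat-$\eps$ coverage base has no such slack, which is why your version would require exactly the delicate quantitative analysis you defer.
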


\begin{lemma}\label{lem:WnearUfar}
There is a function with $W$-distance $O(m^2/2^m)$ whose distance to coverage is $\Omega(1)$.
\end{lemma}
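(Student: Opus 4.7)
The plan is to exhibit $f(T) = \binom{|T|}{2}$ as the witness: a non-negative, symmetric, strictly supermodular function on $[m]$ which should satisfy both bounds simultaneously.

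To compute the $W$-distance, note that by the symmetry of $f$ under permutations of $[m]$, each $w(S)$ depends only on $k = |S|$. Setting $t = m - k$ and writing $T = \bar S \cup X$ with $X \subseteq S$ in \pref{eq:maindef} yields
\[
w(S) \;=\; -\sum_{j=0}^k \binom{k}{j}(-1)^j \binom{t+j}{2},
\]
which up to sign is the $k$-th forward finite difference at $0$ of the degree-$2$ polynomial $j \mapsto \binom{t+j}{2}$. This vanishes for $k \geq 3$, and direct evaluation gives $w(S) = m-1 > 0$ for $|S| = 1$ and $w(S) = -1 < 0$ for $|S| = 2$. Hence exactly $\binom{m}{2}$ of the $2^m - 1$ $W$-coefficients are negative, and the $W$-distance is $\binom{m}{2}/(2^m-1) = O(m^2/2^m)$.

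To lower bound the distance to every coverage function, the key observation is that $f$ violates the submodularity square inequality with \emph{uniform slack} $1$: for every ``square'' $(R, i, j)$ with $R \subseteq [m]$ and distinct $i, j \in [m] \setminus R$,
\[
f(R \cup \{i,j\}) + f(R) - f(R \cup \{i\}) - f(R \cup \{j\}) \;=\; \binom{|R|+2}{2} + \binom{|R|}{2} - 2\binom{|R|+1}{2} \;=\; 1.
\]
So for any coverage $g$ (which must be submodular), $g$ cannot agree with $f$ on all four corners $R, R \cup \{i\}, R \cup \{j\}, R \cup \{i,j\}$ of any single square; equivalently, $D := \{T : f(T) \neq g(T)\}$ is a hitting set for the family of all squares. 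A double-counting argument closes the proof: there are $\binom{m}{2} \cdot 2^{m-2}$ squares in total, and for each $T$ the assignment $\{i, j\} \mapsto (T \setminus \{i, j\},\, i, j)$ gives a bijection between pairs in $[m]$ and squares containing $T$ as a corner, so each $T$ lies in exactly $\binom{m}{2}$ squares. Therefore $|D| \cdot \binom{m}{2} \geq \binom{m}{2} \cdot 2^{m-2}$, giving $|D| \geq 2^{m-2}$ and distance at least $1/4 = \Omega(1)$.

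The $W$-coefficient calculation is a routine finite-difference exercise; the substantive step is recognizing that the \emph{uniform} one-unit supermodularity violation of $f$ forces a constant fraction of disagreements with \emph{every} submodular (hence coverage) target, which is how ``very few negative $W$-coefficients'' and ``$\Omega(1)$ usual distance'' coexist in the same function.
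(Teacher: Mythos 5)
Your proposal is correct and essentially the paper's own argument: the paper uses the same function (up to a harmless shift, its $W$-coefficients are $m$ on singletons, $-1$ on pairs, $0$ otherwise, i.e.\ $f(T)=\binom{|T|+1}{2}$ rather than your $\binom{|T|}{2}$), the same count of $\binom{m}{2}$ negative coefficients for the $W$-distance, and the same observation that a uniform supermodularity slack of $1$ on every square forces any (submodular) coverage function to disagree on at least one corner per square. The only cosmetic difference is that the paper partitions $2^{[m]}$ into quadruples by fixing a single pair $\{i,j\}$, whereas you double-count over all $\binom{m}{2}2^{m-2}$ squares; both yield the same $1/4$ bound.
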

%
\noindent
Despite the fact that the two notions are incomparable, we
argue that the lower bound example of  \pref{sec:far} is
in fact also far from coverage (with proper choice of $k(m)$) in the
usual notion of distance, under a reasonable conjecture about the
property of multilinear polynomials. Unfortunately, we are unable to
prove this conjecture and leave it as an open question.

\begin{conjecture} \label{conj:1}
For any $m$-variate multilinear polynomials $f(\bx) = \sum_{S
    \subseteq [m]} \lambda_S \prod_{i\in S} x_S$ with $\lambda_S < 0$ for all $|S| > k$,  has
    at most $O(k 2^m / \sqrt{m})$ zeroes on the hypercube $\{0,
    1\}^m$.
\end{conjecture}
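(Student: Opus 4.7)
The plan is to reduce \pref{conj:1} to a linear feasibility question via Möbius inversion and Farkas' lemma, and then attack the resulting combinatorial problem with an extremal estimate in the spirit of Littlewood--Offord on the Boolean cube.

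\textbf{Step 1 (Change of basis).} For each $T\subseteq\m$ let $q(T):=f(\mathbf{1}_T)=\sum_{S\subseteq T}\lambda_S$, so the zeros of $f$ on the cube correspond exactly to $Z:=\{T\subseteq\m:q(T)=0\}$. By Möbius inversion on the Boolean lattice, $\lambda_S=\sum_{T\subseteq S}(-1)^{|S\setminus T|}q(T)$, so the hypothesis $\lambda_S<0$ for $|S|>k$ translates into a system of $\sum_{j>k}\binom{m}{j}$ strict linear inequalities in the free variables $\{q(T):T\notin Z\}$ (with $q(T)=0$ pinned on $T\in Z$). The goal becomes: show this system is infeasible whenever $|Z|>C\,k\,2^m/\sqrt m$ for a suitable constant $C$.

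\textbf{Step 2 (Farkas duality).} By (strict) Farkas' lemma, infeasibility is equivalent to the existence of nonnegative multipliers $\{\mu_S\ge 0:|S|>k\}$, not all zero, satisfying the dual equations
\[
\forall T\notin Z:\quad \sum_{S\supseteq T,\,|S|>k}(-1)^{|S\setminus T|}\mu_S \;=\; \sum_{X\subseteq\bar T,\,|T\cup X|>k}(-1)^{|X|}\mu_{T\cup X} \;=\; 0.
\]
The plan is then to exhibit such nonnegative $\mu$ whenever $|Z^c|$ is small, contradicting feasibility of the primal and thereby forcing $|Z|$ to be small.

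\textbf{Step 3 (The main obstacle: nonnegativity).} If we dropped the requirement $\mu\ge 0$, the existence of a nonzero $\mu$ satisfying only $|Z^c|$ linear equations in $\sum_{j>k}\binom{m}{j}$ unknowns would follow trivially by dimension counting. The substantive difficulty is enforcing the nonnegativity. Each dual constraint is an alternating sum of $\mu$'s over the upper shadow of $T$, which is a Littlewood--Offord--style signed sum. The classical Littlewood--Offord theorem tells us that such alternating sums can concentrate at a single value on at most $\binom{m}{m/2}\sim 2^m/\sqrt m$ sign patterns, which matches the $2^m/\sqrt m$ appearing in the conjecture. The extra factor of $k$ should track the $k$ "free" low-degree layers of $f$: varying the low-degree $\lambda_S$ effectively allows one to translate the Littlewood--Offord-tight middle-layer concentration over $k$ adjacent layers, so that zeros concentrate in a slab of thickness $\sim k$ around the middle of the cube.

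\textbf{Expected bottleneck.} Step~3 is where the real work lies, and where I would expect to get stuck. The standard Littlewood--Offord theorem is not directly applicable: the coefficients here are entries of the upward Möbius kernel rather than independent signed scalars, and both the nonnegativity constraint $\mu\ge 0$ and the support restriction $|S|>k$ are additional hypotheses absent from the classical setting. Establishing the sharp quantitative estimate likely requires either a sign-restricted analogue of the elementary polynomial root bound, or a Kruskal--Katona / compression argument on the upper layers of the Boolean lattice adapted to nonnegative-coefficient polynomials. This appears to be a genuinely new combinatorial inequality, which is presumably why the authors state \pref{conj:1} only as a conjecture rather than a theorem.
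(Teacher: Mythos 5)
The statement you are attacking is stated in the paper as \pref{conj:1} precisely because the authors cannot prove it: they explicitly write that they ``are unable to prove this conjecture and leave it as an open question.'' So there is no full proof in the paper to compare against, only a partial result, and your write-up is also (by your own admission in Step~3) not a proof. You correctly diagnose where the real difficulty lies, which is itself consistent with the conjecture's open status.

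What the paper \emph{does} prove is the symmetric special case, and the argument there is quite different from your Farkas route. For symmetric $f$, write $g(i):=f(\bx)$ for any $\bx$ with $\|\bx\|_1=i$; then $g(i)=\sum_{j}\lambda_j\binom{i}{j}$, and the hypothesis $\lambda_j<0$ for $j>k$ says exactly that all finite differences of $g$ of order $>k$ are strictly negative. That forces $g$ to have at most $k+1$ sign changes (hence at most $k+1$ zero levels), and since each Hamming level contains at most $\binom{m}{\lfloor m/2\rfloor}=\Theta(2^m/\sqrt m)$ points, the zero count is $O(k2^m/\sqrt m)$. This is where the $2^m/\sqrt m$ factor comes from in the paper; you reach the same numerology through Littlewood--Offord, but the paper never invokes it.

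Your Steps~1--2 (Möbius change of basis and Farkas/Gordan duality) are a correct reformulation of the problem, and Step~3 correctly identifies the bottleneck: the dual system's constraints involve the Möbius kernel rather than independent signed coefficients, and the nonnegativity $\mu\ge0$ together with the support restriction $|S|>k$ are extra structure that the classical Littlewood--Offord machinery does not handle. So the gap you flag is real, not a matter of missing a trick the authors had; this is genuinely open. One concrete caution about the plan: you also need the infeasibility to hold for \emph{every} zero set $Z$ of the given size (the adversary chooses $Z$), so a dimension-count or concentration heuristic that works only ``on average'' over $Z$ would not suffice. If you want a tractable next step, try extending the paper's sign-change argument beyond the symmetric case (e.g.\ to polynomials that are symmetric within blocks, or via a symmetrization/compression that does not increase the number of cube zeros); that is closer to the line the authors had some traction on.
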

\noindent
In fact, we conjecture that the maximum number of zeros is achieved when
the $k+1$ layers of function values in the ``middle of the hypercube'' are
zero, that is, $f(\bx) = 0$ iff.~$(m-k)/2 \le \|\bx\|_1 \le (m+k)/2$. 
At the end of this section, we  present
some evidence for this conjecture by giving a proving 
it for symmetric functions, that is, when $f(x_1,\ldots,x_m) = f(x_{\sigma(1)},\ldots,x_{\sigma(m)})$ 
for any permutation $\sigma$ of $\m$. We now show that the conjecture implies $f^*$ is far from coverage in 
the usual notion of distance.

\begin{lemma}
    Assuming \pref{conj:1}, with $k(m)=o(\sqrt{m})$, $f^*$ is $1-o(1)$ far from coverage.
\end{lemma}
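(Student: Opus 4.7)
My plan is to fix an arbitrary coverage function $g : 2^\m \to \R_{\ge 0}$ and to upper bound the number of $T \subseteq \m$ on which $f^*(T) = g(T)$; showing this count is $o(2^m)$ for every such $g$ immediately gives that the (usual) distance from $f^*$ to coverage is $1 - o(1)$. By \pref{thm:char-cov}, $g$ has non-negative $W$-coefficients $w_g(S) \ge 0$. Setting $\lambda_S := w^*(S) - w_g(S)$, the high-degree coefficients satisfy $\lambda_S = -1 - w_g(S) \le -1 < 0$ for every $|S| > k$, since $w^*(S) = -1$ there. These strict sign inequalities are precisely what \pref{conj:1} requires.

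The next step is to encode the agreement $f^*(T) = g(T)$ as the vanishing of a multilinear polynomial whose monomial coefficients are the $\lambda_S$. Using \pref{eq:wtof} together with the identity $\mathbf{1}[S \cap T \ne \emptyset] = 1 - \prod_{i \in S}(1 - x_i)$, I would write $\bx = \mathbf{1}_T$ and substitute $\by = \mathbf{1} - \bx$ to obtain
\[
f^*(T) - g(T) \;=\; \sum_{S \ne \emptyset} \lambda_S \;-\; \sum_{S \ne \emptyset} \lambda_S \prod_{i \in S} y_i .
\]
Negating then yields
\[
p(\by) \;:=\; -\!\!\sum_{S \ne \emptyset}\! \lambda_S \;+\; \sum_{S \ne \emptyset} \lambda_S \prod_{i \in S} y_i ,
\]
whose zero set on $\{0,1\}^m$ is in bijection (via $T \mapsto \mathbf{1}_{\bar T}$) with the set of $T$ on which $f^*$ and $g$ coincide. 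By construction, the coefficient of $\prod_{i \in S} y_i$ in $p$ equals $\lambda_S$ for every non-empty $S$, and is therefore strictly negative whenever $|S| > k$; the constant term and low-degree coefficients are unrestricted. This matches the hypothesis of \pref{conj:1} exactly.

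Applying \pref{conj:1} to $p$ then yields at most $O(k \cdot 2^m / \sqrt{m})$ zeros on the hypercube. Taking $k = k(m) = o(\sqrt{m})$ turns this into $o(2^m)$, so $f^*$ and $g$ disagree on a $(1 - o(1))$ fraction of $2^\m$, which completes the argument since $g$ was arbitrary. I expect the only delicate point to be the sign/variable bookkeeping around the change $\by = \mathbf{1} - \bx$: one must negate so that the \emph{negative} $W$-coefficients on high-degree sets produce \emph{negative} polynomial coefficients on high-degree monomials in $\by$, matching the exact form that \pref{conj:1} controls. Beyond that, the proof is a direct reduction to the conjecture.
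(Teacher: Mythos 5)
Your proof is correct and is essentially the paper's argument, just with slightly different bookkeeping: the paper sets $\Delta f := f' - f^*$ for the closest coverage function $f'$ and directly reads off the polynomial with coefficients $-\Delta w(S)$ in the variables $\mathbf{1}_{\bar T}$, while you work with an arbitrary coverage $g$, set $\lambda_S = w^*(S) - w_g(S)$, and perform the negation and change of variables $\by = \mathbf{1} - \bx$ explicitly. These are the same polynomial and the same reduction to \pref{conj:1}.
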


\begin{remark}
    \pref{thm:notester} implies that
    $f^*$ requires superpolynomial queries to test as long as we have
    $k(m) = \omega(\log m)$.
\end{remark}

\begin{proof}
    Consider the coverage function $f'$ that is closest to $f^*$ in the usual notion of distance.
    Let $w',w^*$ be the
    $W$-coefficients of $f',f^*$. Define the function $\Delta f := f' -
    f^*$ and let $\Delta w := w' - w^*$. By linearity of $W$-transformation, we get that $\Delta w$ are the
    $W$-coefficients for $\Delta f$. Therefore,
    $$\textstyle \Delta f(T) = \sum_{S : T \cap S \ne \emptyset} \Delta
    w(S) = \sum_{S \subseteq [m]} \Delta w(S) (1 - \bm{1}_{T \cap S =
    \emptyset}) \enspace.$$
\noindent
Consider the following binary vector representation of $S\subseteq\m$: $\x\in \{0,1\}^m$ such that 
$\x_i = 0$ iff.~$i\in S$. Using this, the function $\Delta f$ can be interpreted as $\Delta f(\x) = W - \sum_{S\subseteq\m} w(S)\prod_{i\in S}x_i$. We are using here the fact that $T\cap S = \emptyset$ is equivalent to $S\subseteq \bar{T}$.
By our choice of $w^*$ and the assumption that $w'(S) \ge 0$ for all $S$, we have $\Delta w(S) \ge
    1$ for all $|S| > k$.  From \pref{conj:1}, we get that at
    most $O(k/\sqrt{m})$-fraction of the function values of $\Delta f$
    are zeroes. So $f'$ is at least $1 - O(k /\sqrt{m})$ far from $f^*$. The lemma follows since $k=o(\sqrt{m})$.
\end{proof}

\paragraph{Support for \pref{conj:1}: Proof for Symmetric Functions.}

Since $f$ is symmetric, each $\lambda_S$ is equal for sets of the same cardinality. Let
$\lambda_j$ denote the value of $\lambda_S$ when $|S| = j$. Then $f$ is equivalent to the 
function $g:\m\mapsto {\mathbb R}$
$$\textstyle g(i) = f(\bx : \|\bx\|_1 = i) = \sum_{j=0}^m \sum_{S : |S|
= j} \lambda_j \prod_{i\in S}x_i = \sum_{j=0}^m \lambda_j {i \choose j}
\enspace.$$ 

\noindent
By our assumption, $\lambda_j < 0$ for all $j > k$. Hence, all the
high order derivatives (at least $k+1$-th order) of $f$ are negative.
Intuitively, since the high order derivatives of $g$ are negative, there
are at most $k+1$ sign-changes of $g(i)$. Therefore, there are at most
$k+1$ different $i$'s such that $g(i) = 0$. This implies the conjecture for symmetric functions.

\subsection{Proof of \pref{lem:WfarUnear} and \pref{lem:WnearUfar}}

\begin{proof}[\pref{lem:WfarUnear}]
Let us consider a function $f$ which is similar to the lower bound
    example in \pref{sec:far}. More concisely, $f$'s
    $W$-representation satisfies that $w(S) = -1$ if $m > |S| > k$,
    and $w(S) = N$ if $|S| \le k$ or $S = [m]$. Here we will let $k =
    m/4$ and $N>0$ is a sufficiently large number; $N=5m!$ suffices.
        For simplicity, assume that $m$ is a multiple of $8$.

    First, it follows immediately from definition 
    that the fraction of weights that are negative is at least $(1 -
    e^{-\Theta(k)}) = (1 - e^{-\Theta(m)})$.
    Next, let us prove that there exists a coverage function $f'$ such
    that the function values of $f$ and $f'$ differ in at most
    $e^{-\Theta(m)}$ fraction of the entries. 
   Let $w'$ denote the
    $W$-representation of $f'$. Let $\Delta f = f' - f$ and $\Delta w
    = w' - w$. Note that $\Delta w$ is the $W$-representation of $\Delta f$.
    In the remainder, we will find a function $\Delta f$ over the subsets of $\m$ satisfying the following 
    properties: (a) $\Delta f$ is non-zero on at most a $e^{-\Theta(m)}$ fraction of the subsets, and 
    (b) the $W$-representation of $\Delta f$, that is $\Delta w$, has the property that 
    $\Delta w(S) \ge 1$ if $m > |S| > m/4$ and $\Delta w(S) \ge -N$ if $|S| \le k$ or $S = [m]$. 
   
    In particular, we will consider $\Delta f$ that is symmetric, that is, for any 
    $T$ and $T'$ with $|T|=|T'| = i$, we have $\Delta f(T) = \Delta f(T') = \wh{f}(i)$
    for some $\wh{f}:\m\mapsto {\mathbb R}$. Note that for symmetric functions, the 
    $W$-representation is also symmetric, that is, given by $\Delta w(S) = \Delta w(S') = \wh{w}(j)$
    whenever $|S| = |S'| = j$. One can easily get a relation between $\wh{w}$ and $\wh{f}$ as follows:
    \begin{align}
       \wh{w}(j) & = \Delta w(S) = \sum_{T : S \cup T = [m]}
        (-1)^{|S \cap T| + 1} \Delta f(T) &  \notag \\
        & = \sum_{i = 0}^m \sum_{T : S
        \cup T = [m], |T| = i} (-1)^{i+j+m+1} \wh{f}(i)\notag  \\
        & = \sum_{i = 0}^m {|S| \choose i-(m-|S|)} (-1)^{i+j+m+1}
        \wh{f}(i) \notag \\
        & = \sum_{i = 0}^m {j \choose m-i} (-1)^{j+(m-i)+1} \wh{ f}(i)
        \notag \\
        & =  \sum_{i = 0}^m {j \choose i} (-1)^{i+j+1}\wh{f}(m-i)
        \label{eq:wf}
    \end{align}
\noindent
In the first equality $S$ is an arbitrary subset of size $j$. We now show that there exists a choice of $\wh{f}:\m\mapsto {\mathbb R}$ such that (a') $\wh{f}(i) = 0$ for $3m/8 \le i \le 5m/8$. Note that this will imply $\Delta f$ is zero on at least 
$(1 - e^{-\Theta(m)})$ subsets implying condition (a). Furthermore, the choice of $\wh{f}$ will imply that (b') $\wh(j) \ge 1$ whenever $m > j > m/4$, and $\wh(j) \ge -N$ otherwise. This implies condition (b). 

For this, let $\alpha_i := (-1)^{i+1}\wh{f}(m-i)$. From \eqref{eq:wf}, we get $(-1)^j\wh{w}(j) = \sum_{i=0}^m \alpha_i{j\choose i}$. We consider the RHS as a polynomial over $j$, and in fact, the degree $i$ polynomials ${j \choose i} := \frac{j(j-1)\ldots(j-i+1)}{i!}$ form what is known as the Mahler bases of rational polynomials (see, for example, \cite{Mahler,Robert}). 

As a result, one can choose $\alpha_i$ for $0\le i < 3m/8$ such that $\sum_{i=0}^{3m/8} \alpha_i{j \choose i}$ is {\em any} desired rational polynomial of degree $(3m/8 - 1)$. In particular, we choose $\alpha_i$'s so that 
\begin{equation} 
    \sum_{i=0}^{3m/8-1} \alpha_i {j \choose i} = h_1(j) = 4 (-1)^{5m/8}
    \prod_{k=m/4+1}^{5m/8-1} (j-k-1/2) \label{eq:alpha1}
\end{equation}
Similarly,  $\sum_{m>i>5m/8}\alpha_i{j \choose i}$ can be chosen to be $j(j-1)\cdots (j - 5m/8)g(j)$ for any degree $(3m/8 - 2)$ polynomial. We choose $\alpha_i$'s for $5m/8 < i < m$ so that 
\begin{equation}
    \sum_{5m/8 < i < m} \alpha_i {j \choose i} = h_2(j) = (20(m!)+4)
    (-1)^{m-1} j(j-1)\dots(j-5m/8) \prod_{k=5m/8+1}^{m-2} (j-k-1/2)
    \label{eq:alpha2}
\end{equation}
Finally, as promised, we let $\alpha_i = 0$ for $3m/8 \le i \le 5m/8$. We now argue that 
condition (b') holds. Note that $\wh{w}(j) = (-1)^j(h_1(j) + h_2(j))$. 
\bigskip

\noindent
\underline{If $m/4 < j \le 5m/8$}: From \eqref{eq:alpha2}, $h_2(j) = 0$. Also, from \eqref{eq:alpha1}, we get that 
the sign of $h_1(j)$ for $m/4 < j < 5m/8$ is precisely $(-1)^{5m/8}(-1)^{5m/8 - j} = (-1)^j$. So, $(-1)^jh_1(j)$ is 
positive. Furthermore, the absolute value of $h_1(j)$ is at least $1$, implying $\wh{w}(j) \ge 1$. 
\bigskip

\noindent
\underline{If $5m/8 < j < m$}: We use that $\wh{w}(j) \ge (-1)^jh_2(j) - |h_1(j)|$. The former term is at least $5m!$ 
via a similar reasoning as above. $|h_1(j)|$, as follows from \eqref{eq:alpha1}, is at most $4m! $. This is because each term in the product is at most $m!$ in absolute value. This gives $\wh{w}(j) \ge m! \ge 1$ in this range. 
\bigskip

\noindent
\underline{If $0 \le j \le m/4$, or $j=m$}: Once again, we get that $\wh{w}(j) = (-1)^jh_1(j)$ which changes it's sign
as $j$ changes. However, the absolute value is at most $4m!$, so choosing $N=5m!$, we get $\wh{w}(j)\ge -N$. 

\noindent
Thus, condition (b') is also satisfied, in turn implying that (b) is satisfied. 
\end{proof}

\begin{proof}[\pref{lem:WnearUfar}]
   Consider the function $f$ whose $W$-representation satisfies that
    $w(S) = m$ if $|S| = 1$, $w(S) = -1$ if $|S| = 2$, and $w(S) = 0$ if
    $|S| \ge 3$. 
        
    We first note that for any subset $T$ and any $i, j \notin T$,
    we have $f(T+i+j) - f(T+i) - f(T+j) + f(T) = - \sum_{S : i, j \in S}
    w(S) = -w({i, j}) = 1$. Therefore, the function is supermodular. So 
    for any subset $R$ and any $i \notin R$, we have $f(R+i) - f(R) \ge 
    f(i) - f(\emptyset) = \sum_{S : i \in S} w(S) = m - (m-1) = 1$.
    Hence, the function is monotonely increasing. Note that $f(\emptyset)
    = 0$. We get that $f$ is non-negative.

    Next, we will show that $f$ is at least $1/4$-far from coverage
    functions. Let us partition all the $2^m$ subsets into groups of
    size $4$ such that for any subset $S$ of $[m]-i-j$, we let $S$,
    $S+i$, $S+j$, and $S+i+j$ be in the same group. Note that the
    function is strictly supermodular yet any coverage function must be
    submodular. So at least one of the four function values in each
    group need to be changed in each group in order to make it a
    coverage function. \end{proof}

\noindent
{\bf Acknowledgements.}
The authors wish to thank C. Seshadhri, Jan Vondr\'ak , Sampath Kannan, Jim Geelen and Mike Saks
for very fruitful conversations. DC especially thanks Sesh for illuminating conversations over the
past few years, and Mike for asking insightful questions.


%

\bibliographystyle{splncs03}
\bibliography{coverage}

\end{document}